\newtheorem{theorem}{Theorem}[section]
\newtheorem{lemma}{Lemma}[section]
\newtheorem{corollary}{Corollary}[section]
\newtheorem{fact}{Fact}[section]
\title{General Distance Balancing for Quantum Locally Testable Codes}
\author{Adam Wills\thanks{DAMTP, Centre for Mathematical Sciences, University of Cambridge, Cambridge CB30WA, UK, and Hon Hai Research Institute, Taipei. Email: \texttt{adamjwills7248@gmail.com}.} \and Ting-Chun Lin\thanks{Department of Physics, University of California San Diego, CA, and Hon Hai Research Institute, Taipei. Email: \texttt{til022@ucsd.edu}.} \and Min-Hsiu Hsieh\thanks{Hon Hai Research Institute, Taipei. Email: \texttt{min-hsiu.hsieh@foxconn.com}.}} 
\begin{document}
\maketitle
\begin{abstract}
In this paper, we prove a lower bound on the soundness of quantum locally testable codes under the distance balancing construction of Evra et al. \cite{evra2022decodable}. Our technical contribution is that the new soundness of the quantum code is at least the old soundness divided by the classical code length (up to a constant factor). This allows us to use any classical code with independent checks when distance balancing, where previously only the repetition code had been considered for qLTCs. 

By using a good classical LDPC code, we are able to grow the dimension of the hypersphere product codes \cite{hastings2016quantum} and the hemicubic codes \cite{leverrier2022towards} while maintaining their distance and locality, but at the expense of soundness. From this, and also by distance balancing a chain complex of Cross et al. \cite{cross2022quantum}, we obtain quantum locally testable codes of new parameters. 
\end{abstract}

\section{Introduction}
A classical code with parity-check matrix $H \in \mathbb{F}_2^{s \times t}$ is said to be a locally testable code (LTC) with soundness $\rho$ if for all words $x \in \mathbb{F}_2^t$, $\frac{|Hx|}{s} \geq \rho\frac{d\left(x, \ker(H)\right)}{t}$. Moreover, it is said to have locality $w$ if each check involves at most $w$ bits. The local testability property has some practical interest, for example \cite{yu2020coded}, but this interest is often limited by the large overheads involved. They are, however, of enormous theoretical interest, playing an important role in every known proof of the celebrated PCP theorem~\cite{arora1998proof, arora1998probabilistic, dinur2007pcp}, and seeing applications in many other areas.

Quantum locally testable codes (qLTCs) have their own definition, introduced by Aharonov and Eldar in \cite{aharonov2015quantum} but, for CSS codes, one may simply say that, up to constant factors, a code defined by parity-check matrices $H_X$ and $H_Z$, $C = CSS(H_X, H_Z)$, is locally testable with soundness $\rho$ if and only if its component codes, $\ker(H_X)$ and $\ker(H_Z)$, each have soundness $\rho$ as well\footnote{This statement forms Fact 17 of \cite{eldar2017local}.}.

Much of the initial interest in qLTCs was driven by the proof in \cite{eldar2017local} that the existence of a qLTC of linear distance, constant soundness and constant locality would imply the well-known NLTS conjecture \cite{freedman2013quantum}. Although the NLTS conjecture was resolved independently \cite{anshu2022nlts}, qLTCs remain of great interest due to the hope that they could be useful in resolving the qPCP conjecture \cite{aharonov2013guest}, in analogue to the classical PCP theorem. Overall, the qLTC conjecture, positing the existence of qLTCs of linear distance, constant soundness, constant locality and constant rate (i.e. linear dimension) is a major open problem. Our two primary objectives are to take a step towards this goal by exhibiting qLTCs in previously unknown parameter regimes, as well as to equip the community with the construction of \cite{evra2022decodable} applied to qLTCs that could aid in exploring further parameter regimes in the future.

\subsection{Background and Overview of Results}

A weaker version of the qLTC conjecture, the qLDPC conjecture, positing the existence of quantum CSS codes of constant locality, constant rate and linear distance, was answered in the positive in \cite{panteleev2022asymptotically} after some 20 years of effort: work that originated with Kitaev's toric code \cite{kitaev2003fault}. To give an extremely brief history of the efforts made towards this goal, we say that the the scaling in distance of $\sqrt{n}\:\text{polylog}(n)$ proved to be a formidable and longstanding barrier - examples exhibiting this distance scaling included, but were not limited to, the toric code itself and the codes of Freedman, Meyer and Luo \cite{freedman2002z2}. In addition, the hypergraph product, as introduced in \cite{tillich2013quantum}, gave codes of distance scaling as $\sqrt{n}$ with linear dimension, and also gave an indication that the homological product, and generalisations thereof, could be an important ingredient in the construction of better codes. The distance barrier was eventually broken by Hastings, Haah and O'Donnell with the fiber bundle codes \cite{hastings2021fiber}, after which the qLDPC conjecture was resolved in relatively short order, after the introduction of the lifted product \cite{panteleev2021quantum} and the balanced product \cite{breuckmann2021balanced}. After the first resolution of the qLDPC conjecture in \cite{panteleev2022asymptotically}, further examples of good qLDPC codes~\cite{leverrier2022quantum, dinur2022good} and a partial construction~\cite{lin2022good} were exhibited, some of which are arguably simpler.

On the classical front, the $c^3$ - conjecture, positing the existence of classical codes of constant rate, soundness, locality and linear distance (known as a $c^3$ - LTC), was solved after extensive effort~\cite{panteleev2022asymptotically, dinur2022locally}, after which further examples were also found~\cite{leverrier2022quantum,lin2022c}. The near-simultaneity of the independent discoveries in \cite{dinur2022locally} and \cite{panteleev2022asymptotically}, as well as the particular similarities in their constructions, has led to speculation over whether there is some deep moral connection between good qLDPC codes and classical $c^3$ - LTCs.

In the present work, we hope to leverage a construction that was used in the search for better quantum LDPC codes for the benefit of constructing better quantum locally testable codes, namely the distance balancing construction of \cite{evra2022decodable}. Distance balancing is most often used in the circumstance that one has a quantum code with `imbalanced distances' - say, very good X - distance, but poor Z - distance, or vice versa. The first example of a distance balancing construction is due to Hastings \cite{hastings2016weight}\footnote{Let us note that the primary goal behind this paper was to reduce the locality of quantum codes, and the distance balancing construction was a bi-product of one of these methods. Let us also note the paper \cite{hastings2021quantum} which was written to update the weight reduction methods and address an error in the prior paper.}. The idea is, to balance the distances of a code with a poor X - distance $d_X$ and a good Z - distance $d_Z$, one takes the homological product of the quantum code (thought of as a 3-term chain complex) with the usual classical repetition code (thought of as a 2-term chain complex) of length $l \approx \frac{d_Z}{d_X}$, to obtain a 4-term chain complex, from which the last three terms may be extracted to define a new quantum code. It may then be shown that the distances change as $d_X \mapsto l\cdot d_X$ and $d_Z \mapsto d_Z$, the number of physical qubits scales as $n \mapsto nl + n_X(l-1)$, where $n_X$ is the number of X - checks, the dimension of the quantum code is unchanged and the locality is unchanged up to a constant. In order to balance distances when the X - distance is good, but the Z - distance is poor, one may simply first take the co-complex of the quantum code, perform the above procedure, and then take the co-complex again. A disadvantage of Hastings' method is that, because the repetition code itself has dimension 1, the resultant quantum code has the same dimension as the old quantum code but, because the new quantum code has more physical qubits, the dimension scaling worsens due to this procedure. This disadvantage is addressed by the updated distance balancing method of Evra et al. \cite{evra2022decodable} (see their Theorem 4.2). The procedure is the same, except the repetition code is replaced by any $[t,k,d]$ classical code with $s$ independent checks. Now the distances and number of physical qubits scale as $d_X \mapsto d\cdot d_X$, $d_Z \mapsto d_Z$, $n \mapsto nt + n_Xs$ and the dimension of the quantum code scales as $K \mapsto k\cdot K$. It therefore makes sense to let the classical code be a good classical LDPC code, at which point the locality of the quantum code will also be unchanged up to a constant.

Our main focus in this paper is local testability and, importantly for us, \cite{hastings2016weight} addresses soundness, where \cite{evra2022decodable} does not. Hastings shows in the former paper (his Lemma 7) that the soundness of the quantum code scales as $\rho \mapsto \Omega\left(\frac{\rho}{l}\right)$. Until the present work, the scaling of soundness under the general distance balancing procedure of \cite{evra2022decodable} has gone unanalysed. This forms our main technical result. We will show, under very mild conditions, that the soundness of the quantum code scales as $\rho \mapsto \Omega\left(\frac{\rho}{t}\right)$, as long as the classical code being used has independent checks. This result is the subject of Section \ref{genDB} and is stated below in Theorem \ref{genDBthm}. 

\begin{theorem}
Let $C$ be an $[[n,K,d_X,d_Z]]$ quantum code with $n_X$ and $n_Z$ X - and Z - checks respectively, where $n_X$ and $n_Z$ scale linearly with $n$. Suppose C has soundness $\rho$ and locality $w$, where $\rho \leq \min\left(\frac{2n}{n_Z},\frac{2n}{n_X}\right)$. Then there exists another quantum code $C'$ with parameters $[[\Theta(tn),\Theta(tK),\Theta(td_X),d_Z]]$ which has soundness $\Omega\left(\frac{\rho}{t}\right)$ and locality $\Theta(w)$ for any $t$.
\label{genDBthm}
\end{theorem}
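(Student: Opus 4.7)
The plan is to invoke Fact~17 of \cite{eldar2017local} to reduce the quantum soundness of $C'$ to the classical soundness of its two component codes $\ker H_X'$ and $\ker H_Z'$. Since the construction is symmetric under swapping chain and cochain, I would focus on the $\ker H_Z'$ side. Unpacking the tensor-product chain complex that underlies the Evra et al.\ construction, the new qubit space splits as $\mathbb{F}_2^{n \times t} \oplus \mathbb{F}_2^{n_X \times s}$, and for $(b, a)$ in this space the new $Z$-syndrome is the pair $(s_1, s_2) = (H_Z b,\, b M^T + H_X^T a)$, where $M \in \mathbb{F}_2^{s \times t}$ is the classical check matrix. A word $(b, a)$ lies in $\ker H_Z'$ precisely when every column of $b$ lies in $\ker H_Z$ and the combinations of columns of $b$ dictated by rows of $M$ are matched by $H_X^T a$.

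The correction would proceed in two stages. Stage~1 applies the old $Z$-component soundness column-wise: for each $j$ there is $\hat b_{:,j} \in \ker H_Z$ with $|b_{:,j} - \hat b_{:,j}| \le (n/\rho n_Z)\, |H_Z b_{:,j}|$, and summing gives $|b - \hat b| \le (n/\rho n_Z)\,|s_1|$ with $H_Z \hat b = 0$. The residual second-part syndrome $\hat s_2 = \hat b M^T + H_X^T a$ then satisfies $|\hat s_2| \le |s_2| + w\,|b - \hat b|$, where $w$ bounds the locality of $M$. Stage~2 corrects $(\hat b, a)$ to a codeword. Since every column of $\hat b$ now lies in $\ker H_Z$, its $X$-logical class is well defined, giving a logical matrix $L = \pi(\hat b) \in \mathbb{F}_2^{K \times t}$; membership of $(\tilde b, \tilde a)$ in $\ker H_Z'$ is equivalent to $\pi(\tilde b) M^T = 0$, that is, every row of $\pi(\tilde b)$ lies in the classical code $\ker M$. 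Using that $M$ has full row rank, I would find a shift $\Delta L$ with each row in the coset $L_{k,:} + \ker M$ and of weight at most $t$, realize it by modifying $\hat b$ within $\ker H_Z$ via minimum-weight logical representatives column by column, and then absorb the remaining stabilizer part of $\hat s_2$ into $\tilde a$ by inverting $H_X^T$ column by column.

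The hard part will be bounding the Stage~2 correction weight by $O(t)\,|\hat s_2|$. The main ingredients I expect to use are: (i) $|\hat s_2| \ge d_X$ times the number of columns of $\hat s_2$ carrying a nontrivial $X$-logical class, since each such column has weight at least the old $X$-distance; (ii) the independent-checks hypothesis on $M$ allows each row of $\Delta L$ to be chosen as a pre-image of the corresponding row of $\pi(\hat s_2) = L M^T$ under $M$ of weight at most $t$, so the realization of $\Delta L$ costs at most $O(t \cdot d_X)$ per nontrivial column of $L M^T$, and combined with (i) gives the $O(t)$ ratio; (iii) the mild hypothesis $\rho \le 2n/n_X$ is invoked to bound the weight of stabilizer pre-images under $H_X^T$ via the old $X$-component soundness, so that $\tilde a$ can be chosen at cost proportional to $|\hat s_2|$. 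Combining both stages and normalizing by $n' = \Theta(nt)$ and $n_Z' = \Theta(nt)$ then yields the new soundness $\Omega(\rho/t)$; locality $\Theta(w)$ is immediate from the block-sparsity of the new check matrices provided $M$ is drawn from a family of constant-locality LDPC codes.
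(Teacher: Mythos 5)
Your setup matches the paper's — reducing via Fact~17 of \cite{eldar2017local} to the classical soundness of the two component codes, and unpacking the tensor-product differentials to get the syndrome pair $(H_Z b,\; b H^T + H_X^T a)$ — but your proof strategy from there is genuinely different. The paper does not construct an explicit low-weight correction of $(b,a)$; instead it repeatedly replaces $x=(\alpha,\beta)$ by $x$ plus a carefully chosen element of $\ker\partial_2^T$ (a 1-coboundary and then a suitable 1-cycle built from $\ker H$ and $\ker H_Z$ elements), so that after the WLOG reductions one simply has $d(x,\ker\partial_2^T)\le|x|=\sum_i|u_i|$, and then it proves a direct inequality $|\partial_2^T x|\ge\Omega(1/s)\sum_i|u_i|$. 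Your plan is the dual strategy: exhibit a nearby codeword and bound the weight of the correction. This can in principle work, and your Stage~1 and the logical-class bookkeeping $\pi(\tilde b)H^T=0$ are reasonable, but the hard part is exactly where your argument is thinnest.

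The genuine gap is in bounding the weight of the $a$-part of the correction. You need $\tilde a$ with $H_X^T\tilde a=\tilde b H^T$ and $|\tilde a-a|$ small, and you claim in item~(iii) that the hypothesis $\rho\le 2n/n_X$ lets you bound the weight of stabilizer pre-images under $H_X^T$ via $X$-component soundness. That is not what soundness gives you: soundness of the code $\ker H_X$ controls $d(v,\ker H_X)$ in terms of $|H_Xv|$, i.e.\ it is a co-cycle expansion statement for the map $H_X$, not a filling/pre-image bound for $H_X^T$, and the latter is not controlled by the hypotheses in general. In the paper, the hypothesis $\rho\le\min(2n/n_X,2n/n_Z)$ is used only to discharge the $\min(n_X\rho_X/n,1)$ and $\min(n_Z\rho_Z/n,1)$ factors in Lemmas~\ref{kerd2tSoundness} and \ref{kerd1Soundness}; it plays no role in bounding pre-images. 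The paper avoids the $a$-part problem entirely with a different move: because $H$ has independent checks, $H$ is surjective, so for any $\beta\in\mathbb{F}_2^{n_X}\otimes\mathbb{F}_2^s$ one can find a 1-coboundary of the form $(\star,\beta)$ and add it to $x$ to set $\beta=0$ WLOG before any weight accounting begins. Your proposal has no analogue of this step, and without it you are stuck. A secondary issue is item~(ii): the weight of a minimum-weight logical representative of an arbitrary class in $\mathbb{F}_2^K$ is only lower-bounded, not upper-bounded, by $d_X$, so the claimed $O(t\cdot d_X)$ cost per nontrivial column is not justified as written. Finally, the two sides are not literally symmetric — the paper needs separate arguments for $\partial_2^T$ and $\partial_1$ (Lemmas~\ref{kerd2tSoundness} and \ref{kerd1Soundness}) because the three-term input complex places the qubit space asymmetrically between $n_Z$ and $n_X$ checks — so "by symmetry" should be replaced by an explicit second argument.
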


Most of the conditions of the above theorem are not too restrictive. The assumption of linearity of the numbers of checks is very natural and assuming that $\rho$ is upper bounded by some constant would generally not be a problem. Indeed, we can always take $\rho$ to be as such, because a code with soundness $\rho$ has soundness $\rho'$ for any $\rho' \leq \rho$. The requirement for independent checks is an important one. Not only does it make our proofs tractable, but it is essential in the proof of distance scaling in Theorem 4.2 of \cite{evra2022decodable}. Indeed, if the checks have some redundancy, it is possible that the resultant quantum code could be something entirely other than a distance balanced version of the inputted code.

The present paper is not the first work in which distance balancing methods have been considered as a way to obtain qLTCs in new parameter regimes. In \cite{cross2022quantum}, Cross et al. consider a number of constructions for obtaining qLTCs of new parameters. For some of them, the following chain complex is utilised:

\begin{equation}
    Q = \left(\mathbb{F}_2^n\overset{H_Z^T}{\longrightarrow}\mathbb{F}_2^{2n}\overset{H_X}{\longrightarrow}\mathbb{F}_2^m\right),
\end{equation}
where $H_Z = [I_n | I_n]$ and $H_X = [\hat{H}|\hat{H}]$ for $\hat{H} \in \mathbb{F}_2^{m \times n}$; this complex and the quantum code it defines will be referred to simply as `Q' throughout this work. It is shown in their Claim 3.1 that if $\hat{H}$ is the parity-check matrix for a code of soundness $\rho$, then $H_X$ is the parity-check matrix for a code of soundness $2\rho$. Moreover, the quantum code corresponding to the chain complex Q is shown to have dimension $k$, Z - distance $d$ and X - distance 2, where the code $\ker(\hat{H})$ has dimension $k$ and distance $d$. Thus, by letting $\hat{H}$ be a parity-check matrix for a $c^3$ - LTC, one obtains a quantum code of constant locality, constant rate, constant soundness, linear Z - distance, but constant X - distance. This chain complex is therefore a good candidate for distance balancing. 

By applying the distance balancing procedure of Hastings and his proven bound on soundness, Cross et al. obtain their Theorem 1.3, which gives quantum LTCs of parameters shown in Table \ref{exoticParams}. The authors also show something else quite interesting. Below is shown the usual parity-check matrix for the repetition code followed by another parity-check matrix for the same code, obtained by row operations on the first.

\begin{equation}
H_l = \underbrace{\begin{pmatrix} 1 & 1 & 0 & 0 & \ldots & 0 & 0\\
0 & 1 & 1 & 0 & \ldots & 0 & 0\\
0 & 0 & 1 & 1 & \ldots & 0 & 0\\
&&&&\ddots\\
0&0&0&0&\ldots&1&1
\end{pmatrix}}_{l}
\left.\vphantom{\begin{pmatrix} 1 & 1 & 0 & 0 & \ldots & 0 & 0\\
0 & 1 & 1 & 0 & \ldots & 0 & 0\\
0 & 0 & 1 & 1 & \ldots & 0 & 0\\
&&&&\ddots\\
0&0&0&0&\ldots&1&1
\end{pmatrix}}\right\}l-1\hspace{1cm}\tilde{H_l} = \underbrace{\begin{pmatrix} 1 & 0 & 0 & 0 & \ldots & 0 & 1\\
0 & 1 & 0 & 0 & \ldots & 0 & 1\\
0 & 0 & 1 & 0 & \ldots & 0 & 1\\
&&&&\ddots\\
0&0&0&0&\ldots&1&1
\end{pmatrix}}_{l}
\left.\vphantom{\begin{pmatrix} 1 & 1 & 0 & 0 & \ldots & 0 & 0\\
0 & 1 & 1 & 0 & \ldots & 0 & 0\\
0 & 0 & 1 & 1 & \ldots & 0 & 0\\
&&&&\ddots\\
0&0&0&0&\ldots&1&1
\end{pmatrix}}\right\}l-1
\end{equation}

The authors of \cite{cross2022quantum} show in their Lemma 5.4 that when the modified repetition code parity-check matrix $\tilde{H}_l$ is used to distance balance the chain complex Q, the resultant quantum code will in fact have preserved soundness, up to a constant. This does, however, have the disadvantage that locality will rise, because of the one high weight column in $\tilde{H_l}$, although the authors argue that while maximum locality increases, things are much better for the average locality because $\tilde{H_l}$ has only one non-constant weight column. Average locality is an interesting quantity to consider in some circumstances, particularly practical ones, but not something we choose to look at in the present work. Using $\tilde{H_l}$ to distance balance the quantum code Q allows the authors to arrive at their Theorem 1.4. The parameters of the codes of their Theorems 1.3 and 1.4 are shown in Table \ref{exoticParams}.
\renewcommand{\arraystretch}{1.1}
\begin{table}[h]
\centering
\begin{tabular}{c||c|c}
& Theorem 1.3 of \cite{cross2022quantum} & Theorem 1.4 of \cite{cross2022quantum}\\\hline\hline
Physical Qubits & $\mathcal{O}(nl)$&$\mathcal{O}(nl)$\\
Soundness & $\Omega(1/l)$ & $\Omega(1)$\\
Distance & $\Theta(\min(n,l))$ & $\Theta(\min(n,l))$\\
Dimension & $\Theta(n)$ & $\Theta(n)$\\
Locality & $\Theta(1)$ & $(\text{avg},\text{max}) = (\Theta(1), \Theta(l))$
\renewcommand{\arraystretch}{1}
\end{tabular}\caption{Parameters resulting from the distance balancing of the chain complex Q with, first, the usual parity-check matrix for the repetition code of length $l$, and second, the modified parity-check matrix for the same code, as shown in \cite{cross2022quantum}. Notice that distance and dimension are the same between the two - they must be because these quantities are properties of codespaces, and are therefore the same if we go to a different parity-check matrix for the same code. Soundness and locality, on the other hand, do depend on parity-check matrices themselves, and so may differ. We comment also that \cite{cross2022quantum} refers to `check weight' (maximum number of qubits involved in a check) and `qubit degree' (maximum number of checks in which a qubit is involved) where here, for simplicity, we use only the word `locality', which may be defined as the maximum of the two.}\label{exoticParams}
\end{table}

The authors do also ask if there may be other, more general product constructions under which soundness is preserved. Given our main result, a very natural question to ask is whether there exist other classical codes for which the distance balancing construction allows for the preservation of soundness, rather than causing it to fall by a factor of $t$. What would be ideal would be a good classical LDPC code (with independent checks) that allows for the preservation of soundness. A second, weaker question would be the same with any locality allowed in the classical code - this would still give an improvement on the modified parity-check matrix for the repetition code. Our attempts at either of these questions have led to distance and/or rate being sacrificed in the classical code, and so these questions are left open. We also make the comment that asking for one universal classical code that can perform distance balancing in either of the above ways seems quite demanding. Whether either of the above could be done in such a way that the classical code is allowed to depend on the inputted quantum code is a further, interesting, open question.

We hope that Theorem \ref{genDBthm} will be of use in constructing qLTCs in new parameter regimes in future work, but we will discuss here applications that we can already foresee. The primary one is, in fact, not really distance balancing at all, but does use the construction of Theorem \ref{genDBthm}. To illustrate this, we will explain a circumstance in which the same idea was used in the past. In \cite{panteleev2021quantum}, the authors obtain a quantum LDPC code of very good distance, but low dimension. They show that it is possible to grow the dimension of their code at the expense of its distance, by the following procedure. Consider a quantum code specified by the three-term chain complex C. Suppose it has parameters $[[n,K,d_X,d_Z]]$. Consider also a classical code with independent checks specified by some two-term complex R. Suppose it has parameters $[t,k,d]$ and a number of checks linear in its length. Taking the homological product of C with the co-complex $R^*$ and taking the last three terms of the complex yields a complex C' corresponding to a quantum code with parameters $[[\Theta(nt),Kk,d\cdot d_X,d_Z]]$. This is the usual distance balancing procedure. We then take the co-complex, thereby swapping the role of $X$ and $Z$, and perform the distance balancing construction again. We may then take the co-complex again just so $X$ and $Z$ finish in the same positions they started in. The result is a complex C'' which corresponds to a quantum code with parameters $[[\Theta(nt^2),Kk^2,d\cdot d_X, d\cdot d_Z]]$. Letting R be a good classical LDPC code with independent checks, we may therefore transform an $[[n,K,D]]$ quantum code into a $[[\Theta(nt^2),\Theta(Kt^2),\Theta(Dt)]]$ quantum code. The locality of the latter code will be the same as the former code, up to a constant. Moreover, we know that the soundness of the quantum code will drop by a factor of $t$ in each of the two distance balancing steps. We therefore obtain the following corollary of Theorem \ref{genDBthm}.

\begin{corollary}
Let $C$ be an $[[n,K,D]]$ quantum code with $n_X$ and $n_Z$ X - and Z - checks respectively, where $n_X$ and $n_Z$ scale linearly with $n$. Suppose $C$ has soundness $\rho$ and locality $w$, where $\rho \leq \min\left(\frac{2n}{n_X},\frac{2n}{n_Z}\right)$. Then there exists another quantum code with parameters $[[\Theta(nt^2),\Theta(Kt^2),\Theta(Dt)]]$, locality $\Theta(w)$ and soundness $\Omega\left(\frac{\rho}{t^2}\right)$, for any desired $t$.\label{cor}
\end{corollary}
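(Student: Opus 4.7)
The plan is to apply Theorem \ref{genDBthm} twice, interleaved with the chain complex duality that swaps the roles of $X$ and $Z$, following the sketch given in the paragraph preceding the corollary statement. First I fix a good classical LDPC code $R$ with independent checks, of length $t$, dimension $\Theta(t)$, distance $\Theta(t)$, constant locality and a linear number of checks (such codes exist, e.g.\ by \cite{panteleev2022asymptotically}). Because $R$ has both dimension and distance $\Theta(t)$, distance balancing by $R$ multiplies both the quantum dimension and the $X$-distance by $\Theta(t)$ while leaving the $Z$-distance fixed, and costs a factor of $t$ in soundness.

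The first application of Theorem \ref{genDBthm} to $C$ with $R$ yields a code $C'$ of parameters $[[\Theta(tn), \Theta(tK), \Theta(tD), D]]$, locality $\Theta(w)$ and soundness $\Omega(\rho/t)$. I then pass to the dual chain complex, interchanging the $X$- and $Z$-parts of the CSS description: this is a purely formal operation that preserves the number of physical qubits, the dimension, the locality and the soundness (the latter by symmetry in $X$ and $Z$, per Fact 17 of \cite{eldar2017local}), while swapping the two distances and the two check counts. The dual is an $[[\Theta(tn), \Theta(tK), D, \Theta(tD)]]$ code, now with the poor distance in the $X$-role. A second application of Theorem \ref{genDBthm} with $R$ multiplies this $X$-distance by $\Theta(t)$, boosts the qubit count and the dimension each by another factor of $\Theta(t)$, and costs another factor of $t$ in soundness. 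Finally I dualize once more simply to restore the original $X/Z$ labelling convention; the final code has parameters $[[\Theta(nt^2), \Theta(Kt^2), \Theta(Dt)]]$, locality $\Theta(w)$ and soundness $\Omega(\rho/t^2)$, as claimed.

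The only substantive work is the bookkeeping required to check that the hypotheses of Theorem \ref{genDBthm} still hold at the second invocation. We need (i) that the $X$- and $Z$-check counts of the intermediate code are linear in its qubit count, and (ii) that the intermediate soundness is at most $\min(2n'/n'_X,\,2n'/n'_Z)$. For (i), Theorem \ref{genDBthm} applied with a classical code whose number of checks is linear in its length preserves linearity of the total check counts in the qubit count, and dualization merely swaps the two. For (ii), the intermediate soundness is $\Omega(\rho/t)\leq\rho$, while the right-hand side is $\Theta(1)$ by (i); hence this bound is weaker than the one originally assumed on $\rho$ and is automatic. I expect this bookkeeping to be the main, and rather mild, obstacle; the rest is just composing parameter scalings.
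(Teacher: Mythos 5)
Your proof is correct and follows essentially the same route as the paper: the corollary is justified in the text only by the paragraph immediately preceding it, which is exactly the double application of Theorem~\ref{genDBthm} separated by dualization that you describe, with a good classical LDPC code $R$ of length $t$, dimension and distance $\Theta(t)$. Your added bookkeeping about re-verifying the hypotheses of Theorem~\ref{genDBthm} at the second invocation is a sensible elaboration the paper does not make explicit; the only small imprecisions are writing the intermediate parameters with $D$ in place of tracking $d_X$ and $d_Z$ separately (which is harmless since the final $\min$ comes out to $\Theta(tD)$ either way), and the slight abuse of notation in ``$\Omega(\rho/t)\le\rho$'' (what you mean is that, per the paper's own remark, one may always replace the soundness by any smaller value, so the hypothesis can be met).
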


Supposing now that the distance and dimension of the original code were $D = \Theta(n^\delta)$ and $K = \Theta(n^\kappa)$ respectively and we let $t = n^\alpha$, the resulting code will have distance $\Theta\left((n'')^\frac{\delta+\alpha}{1+2\alpha}\right)$ and dimension $\Theta\left((n'')^\frac{\kappa+2\alpha}{1+2\alpha}\right)$ where $n''$ is the number of qubits in the resulting code. The distance and dimension will therefore approach square-root and linear regardless of where they start from. Moreover, if the distance starts as a square-root, it will stay as such. Note that in \cite{panteleev2021quantum}, the distance only fell with growing dimension because it started larger than $\sqrt{n}$.

We are now in a position that we can state the main new parameters that we obtain in this work. These are obtained by applying the above construction to two previous qLTC constructions: the hypersphere product codes of Hastings \cite{hastings2016quantum} and the hemicubic codes of Leverrier, Londe and Z\'emor \cite{leverrier2022towards}. These are two examples of codes that have square-root distance, with soundness and locality polylogarithmic away from being optimal. Both, however, encode a constant number of qubits. Table \ref{genParams} contains the parameters of these codes followed by the general parameters that we obtain, while Table \ref{exampleParams} contains two examples of our parameters, both arising from the hemicubic code.
\renewcommand{\arraystretch}{1.5}
\begin{table}[h]

\centering
\begin{tabular}{c||c|c||c|c}
& \makecell{Hypersphere\\Product Codes \cite{hastings2016quantum}} & \makecell{Hemicubic\\Codes \cite{leverrier2022towards}} & \makecell{This Method\\Applied to \cite{hastings2016quantum}} & \makecell{This Method\\Applied to \cite{leverrier2022towards}}\\\hline\hline
Physical Qubits & $n$ & $n$ & $\Theta\left(nt^2\right)$ & $\Theta\left(nt^2\right)$\\
Soundness & $\frac{1}{\log(n)^2}$ & $\Omega\left(\frac{1}{\log(n)}\right)$ & $\Omega\left(\frac{1}{\log(n)^2t^2}\right)$ & $\Omega\left(\frac{1}{\log(n)t^2}\right)$\\
Distance & $\Theta\left(\sqrt{n}\right)$ & $\Theta\left(\sqrt{n}\right)$&$\Theta\left(\sqrt{n}t\right)$&$\Theta\left(\sqrt{n}t\right)$\\
Dimension & $2$ & $1$ & $\Theta(t^2)$&$\Theta(t^2)$\\
Locality & $\Theta\left(\frac{\log(n)}{\log\log(n)}\right)$ & $\mathcal{O}\left(\log(n)\right)$ & $\Theta\left(\frac{\log(n)}{\log\log(n)}\right)$ & $\mathcal{O}\left(\log(n)\right)$
\end{tabular}\caption{Columns 2 and 3 contain the parameters of the hypersphere product and hemicubic codes, while columns 4 and 5 contain the general parameters that we obtain by applying the above construction to each of them using a good classical LDPC code of length $t$.}\label{genParams}
\end{table}
\renewcommand{\arraystretch}{1}
\renewcommand{\arraystretch}{1.5}
\begin{table}[h]

\centering
\begin{tabular}{c||c|c}
& Logarithmic Example & Polynomial Example\\\hline\hline
Physical Qubits & $n$ & $n$\\
Soundness & $\Omega\left(\frac{1}{\log(n)^2}\right)$ & $\Omega\left(\frac{1}{n^{\frac{2\alpha}{1+2\alpha}}\log(n)}\right)$\\
Distance & $\Theta\left(\sqrt{n}\right)$ & $\Theta\left(\sqrt{n}\right)$\\
Dimension & $\Theta\left(\log(n)\right)$ & $\Theta\left(n^{\frac{2\alpha}{1+2\alpha}}\right)$\\
Locality & $\mathcal{O}\left(\log(n)\right)$ & $\mathcal{O}\left(\log(n)\right)$
\end{tabular}\caption{Example parameters arising from applying the above construction to the hemicubic codes - with a logarithmic ($t = \sqrt{\log(n)}$) and then a polynomial ($t = n^\alpha$) classical code length. One could also apply the construction to the hypersphere product codes to obtain a very slightly improved locality, at the expense of a slightly decreased soundness.}\label{exampleParams}
\end{table}
\renewcommand{\arraystretch}{1}

In particular, we note that we obtain quantum locally testable codes for which the distance times by the soundness is strictly greater than the locality whose dimension may be any $n^{\frac{1}{2}-\epsilon}$ for arbitrarily small $\epsilon > 0$, where previously this had only been shown for constant dimension. Moreover, all of these codes have square-root distance and logarithmic locality.

The second and final application that we consider is to the previously described chain complex Q:

\begin{equation}
    Q = \left(\mathbb{F}_2^n\overset{H_Z^T}{\longrightarrow}\mathbb{F}_2^{2n}\overset{H_X}{\longrightarrow}\mathbb{F}_2^m\right)
\end{equation}
where $H_Z = [I_n | I_n]$ and $H_X = [\hat{H}|\hat{H}]$ and here $\hat{H}$ is taken to be a parity-check matrix for a $c^3$ - LTC. We recall that this gives a quantum code with constant soundness, constant rate, linear Z - distance but constant X - distance (equal to 2). Applying the distance balancing construction of Theorem \ref{genDBthm} to this gives the parameters of Table \ref{QParams}, which represent a small improvement over those of Theorem 1.3 of \cite{cross2022quantum}.
\renewcommand{\arraystretch}{1.3}
\begin{table}[h]

\centering
\begin{tabular}{c||c||c|c}
& Theorem 1.3 of \cite{cross2022quantum} & \makecell{This Method\\Applied to Q} & \makecell{Example Parameters\\From our Construction}\\\hline\hline
Physical Qubits & $\Theta(nl)$ & $\Theta(nt)$ & $n$\\
Soundness & $\Omega(1/l)$ & $\Omega(1/t)$ & $\Omega(1/\log(n))$\\
Distance & $\Theta(\min(n,l))$ & $\Theta(\min(n,t))$&$\Theta(\log(n))$\\
Rate & $\Theta(1/l)$ & $\Theta(1)$ & $\Theta(1)$\\
Locality & $\Theta(1)$ & $\Theta(1)$ & $\Theta(1)$
\end{tabular}\caption{The parameters resulting from distance balancing the complex Q with the repetition code of length $l$ (column 2) and a good classical LDPC code of length $t$ (column 3). A logarithmic-length example of these parameters may be found in column 4.}\label{QParams}
\end{table}
\renewcommand{\arraystretch}{1}

\section{Preliminaries}

\subsection{Quantum CSS Codes}

We present a brief review of the relevant details of constructing quantum CSS codes from chain complexes, but we recommend the review, \cite{breuckmann2021quantum}, and the references contained therein, for more detail. 

A quantum CSS code on qubits may be thought of as a three-term chain complex\footnote{Note that it will be normal for us not to draw a distinction between chain complexes and the codes they define.} over $\mathbb{F}_2$ i.e. some 
\begin{equation}
    C = \left(\mathbb{F}_2^{n_Z}\overset{H_Z^T}{\longrightarrow}\mathbb{F}_2^{n}\overset{H_X}{\longrightarrow}\mathbb{F}_2^{n_X}\right),
\end{equation}
for which $H_X \cdot H_Z^T = 0$; this code is denoted $CSS(H_X, H_Z)$. $H_X$ and $H_Z$ may respectively be thought of as parity-check matrices for the classical codes of X - checks and Z - checks respectively. Accordingly, $\mathbb{F}_2^{n_X}$ is thought of as the space of X - checks and $\mathbb{F}_2^{n_Z}$ the space of Z - checks, while $\mathbb{F}_2^n$ is thought of as the space of qubits. The X - distance and Z - distance of the code are then the lowest weight logical errors of X - and Z - type respectively. These are
\begin{equation}
    d_X(C) = \min\left\{|v| : v \in \ker(H_Z) \setminus \text{im}(H_X^T)\right\}
\end{equation}
and
\begin{equation}
    d_Z(C) = \min\left\{|v| : v \in \ker(H_X) \setminus \text{im}(H_Z^T)\right\},
\end{equation}
from which the minimum distance of the code, or simply distance, $D(C) = \min\{d_X(C), d_Z(C)\}$, may be defined. Given such a complex $C$, one may take the co-complex:

\begin{equation}
    C^* = \left(\mathbb{F}_2^{n_X}\overset{H_X^T}{\longrightarrow}\mathbb{F}_2^{n}\overset{H_Z}{\longrightarrow}\mathbb{F}_2^{n_Z}\right),
\end{equation}
which is equivalent to defining a new quantum code with the role of X and Z swapped. The other relevant properties of the quantum code defined by the chain complex $C$ are its dimension, locality and soundness. Its dimension, the number of logical qubits that it encodes, is defined as the dimension of its first homology group, $\dim(\ker(H_X)/\text{im}(H_Z^T))$, whereas its locality may be defined simply as the maximum weight of any column or row of $H_X$ and $H_Z$. It is also common to talk about rate, which is dimension divided by the code length. To define soundness, we will first discuss the case for classical codes.

\subsection{Classical Locally Testable Codes and Quantum Locally Testable Codes}

A (linear) classical code with $t$ bits and $s$ checks and parity-check matrix $H$ may be thought of as a two-term chain complex over $\mathbb{F}_2$:
\begin{equation}
    R = \left(\mathbb{F}_2^t \overset{H}{\longrightarrow} \mathbb{F}_2^s\right).
\end{equation}
To say that $H$ has independent checks is to say that the rows of the matrix $H$ are linearly independent - note that this is equivalent to saying that the matrix is surjective. This will be an important property for us later on. The classical code is then said to be a locally testable code (LTC) with soundness $\rho$ if for all words $x \in \mathbb{F}_2^t$, $\frac{|Hx|}{s} \geq \rho\frac{d\left(x, \ker(H)\right)}{t}$. Roughly speaking, this says that words far from the code must have a high-weight syndrome.

The most informative definition of soundness for quantum locally testable codes is more involved, but for CSS codes there is a condition that simplifies it considerably. First, we give the full definition for a stabiliser code as in \cite{eldar2017local}. Given $m$ Pauli operators (stabilisers) which define a stabiliser code, $g_1, ..., g_m$, we may define projectors $\Pi_i = (I-g_i)/2$. The stabiliser codespace is then exactly the 0-eigenspace of $\sum_{i=1}^m \Pi_i$. Given this codespace $C$, the t-fattening of $C$ may be defined as
\begin{equation}
    C_t = \text{span}\left\{(A_1\otimes ... \otimes A_n)\ket{\psi} s.t. \ket{\psi} \in C, \left|\left\{i \in \{1, ..., n\} : A_i \neq I\right\}\right| \leq t\right\}.
\end{equation}
Then defining $\Pi_{C_t}$ to be the projector onto $C_t$, we may define the observable measuring distance from the codespace via
\begin{equation}
    D_C = \sum_{t \geq 1}t\left(\Pi_{C_t} - \Pi_{C_{t-1}}\right).
\end{equation}
Finally, we may say that such a stabiliser code is locally testable with soundness $\rho$ if the following operator inequality holds:
\begin{equation}
    \frac{1}{m}\sum_{i=1}^m\Pi_i \succeq \frac{\rho}{n}D_C.
\end{equation}
One can see the similarities with the classical definition, from which syndrome weight $|Hx|$ has been replaced by the code Hamiltonian, $\sum_{i=1}^m\Pi_i$, and distance from the code, $d\left(x, \ker(H)\right)$, has been replaced by the distance observable. For CSS codes, this is reducible to the classical definition via the following lemma, which forms Fact 17 in \cite{eldar2017local}.
\begin{lemma}[{\cite[Fact 17]{eldar2017local}}]
\begin{itemize}
If $CSS(H_X,H_Z)$ is locally testable with soundness $\rho$ then the codes with parity-check matrices $H_X$ and $H_Z$ are each locally testable with soundness at least $\rho/2$. If the codes with parity-check matrices $H_X$ and $H_Z$ are locally testable with soundness $\rho$ then $CSS(H_X,H_Z)$ is locally testable with soundness at least $\rho$.
\end{itemize}
\label{componentSoundness}
\end{lemma}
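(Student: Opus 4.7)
The plan is to exploit that every $X$-type stabilizer projector $\Pi_i^X$ commutes with every $Z$-type projector $\Pi_j^Z$, so the Hilbert space decomposes as $\bigoplus_{\sigma_X, \sigma_Z} V_{\sigma_X, \sigma_Z}$ into simultaneous syndrome eigenspaces. On such a sector, the code Hamiltonian $\sum_i \Pi_i^X + \sum_j \Pi_j^Z$ acts as the scalar $|\sigma_X| + |\sigma_Z|$. The key preparatory claim is that $D_C$ also acts as a scalar on each $V_{\sigma_X, \sigma_Z}$, equal to $d_\sigma := \min\{|X^a Z^b| : H_Z a = \sigma_Z,\ H_X b = \sigma_X\}$. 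This follows from the fact that $V_{\sigma_X, \sigma_Z} = P \cdot C$ for any fixed Pauli $P$ with that syndrome, combined with the observation that any two such $P$ differ by an element of the normalizer (either a stabilizer or a logical), both of which preserve the codespace and hence permute each fattening $C_t$. Consequently both sides of the CSS operator inequality are block-diagonal in the syndrome decomposition, and proving the inequality reduces to verifying the single scalar inequality
\[
\frac{|\sigma_X| + |\sigma_Z|}{n_X + n_Z} \ \geq\ \frac{\rho_{\mathrm{CSS}}}{n}\, d_\sigma
\]
in each syndrome sector.

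For the reverse direction (classical soundness $\Rightarrow$ CSS soundness), I would upper bound $d_\sigma$ by the natural estimate $d_\sigma \leq d(a, \ker H_Z) + d(b, \ker H_X)$, valid for any $a, b$ with the prescribed syndromes, which follows because $|X^a Z^b| = |\mathrm{supp}(a) \cup \mathrm{supp}(b)| \leq |a| + |b|$. Applying the classical soundness hypothesis to $H_X$ and $H_Z$ separately yields $d(b, \ker H_X) \leq n|\sigma_X|/(\rho\, n_X)$ and $d(a, \ker H_Z) \leq n|\sigma_Z|/(\rho\, n_Z)$, and plugging these in converts the scalar inequality into a clean statement about syndrome weights that holds up to a constant depending on how balanced $n_X$ and $n_Z$ are.

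For the forward direction (CSS soundness $\Rightarrow$ classical soundness), I would apply the operator inequality directly to the specific eigenstate $Z^b|\psi\rangle$, where $|\psi\rangle$ is an arbitrary codestate and $b$ an arbitrary vector. This is a joint syndrome eigenstate with $\sigma_Z = 0$ and $\sigma_X = H_X b$, on which $D_C$ acts as $d(b, \ker H_X)$ by the scalar computation above (no $X$-component of the error is needed to return to the code, so the minimum-weight Pauli is purely of $Z$-type). The CSS inequality applied to this eigenstate then immediately yields $\frac{|H_X b|}{n_X} \geq \rho_X \cdot \frac{d(b, \ker H_X)}{n}$, up to the $(n_X+n_Z)/n_X$ normalization factor, proving classical soundness of $H_X$; the analogous argument with $X^a|\psi\rangle$ handles $H_Z$.

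The technical crux -- really the only nontrivial step -- is verifying that $D_C$ is genuinely scalar on each syndrome sector, since its definition through the nested fattenings $C_t$ does not make this manifest. One first checks that each $C_t$ is stabilizer-invariant (so $\Pi_{C_t}$ commutes with stabilizers, and hence with the whole syndrome decomposition), and then that a state $P|\psi\rangle \in V_{\sigma_X, \sigma_Z}$ lies in $C_t$ precisely when some Pauli of weight $\leq t$ shares the syndrome of $P$. Once this is in hand, the remainder is essentially a translation between quantum syndrome weights and classical syndrome weights, with the constant factors arising purely from the normalizations appearing in the two definitions of soundness.
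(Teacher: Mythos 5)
The paper does not actually prove Lemma~\ref{componentSoundness}: it is imported verbatim as Fact~17 of Eldar and Harrow, so there is no in-paper proof to measure your attempt against. Your direct argument is nonetheless essentially the right way to establish such a statement. The core observation --- that the CSS code Hamiltonian $\frac{1}{m}\sum_i\Pi_i$ and the distance observable $D_C$ are simultaneously block-diagonal over the joint syndrome sectors $V_{\sigma_X,\sigma_Z}$, so the operator inequality reduces to one scalar inequality per sector --- is correct, and your justification of the nontrivial half (each fattening $C_t$ is stabilizer-invariant since conjugating a weight-$\leq t$ error by a Pauli stabilizer yields another weight-$\leq t$ error, and $V_\sigma\subseteq C_t$ iff some Pauli of weight $\leq t$ realizes syndrome $\sigma$, so $D_C$ restricted to $V_\sigma$ is $d_\sigma I$ with $d_\sigma=\min\{\lvert X^aZ^b\rvert : H_Za=\sigma_Z,\ H_Xb=\sigma_X\}$) is sound. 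The translations $d_\sigma\leq d(a,\ker H_Z)+d(b,\ker H_X)$ and, for $\sigma_Z=0$, $d_\sigma=d(b,\ker H_X)$ are likewise correct.

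One caveat concerns the constants, which you partially flag. With the paper's normalizations, your forward argument (testing on $Z^b\lvert\psi\rangle$) gives $\frac{\lvert H_Xb\rvert}{n_X+n_Z}\geq\frac{\rho}{n}d(b,\ker H_X)$, hence classical soundness $\geq\rho\cdot\frac{n_X+n_Z}{n_X}\geq\rho$ for $H_X$ --- strictly better than the stated $\rho/2$. Your reverse argument gives $\lvert\sigma_X\rvert+\lvert\sigma_Z\rvert\geq\frac{\rho}{n}\bigl(n_Xd(b,\ker H_X)+n_Zd(a,\ker H_Z)\bigr)\geq\frac{\rho\min(n_X,n_Z)}{n}d_\sigma$, i.e.\ CSS soundness $\geq\rho\cdot\frac{\min(n_X,n_Z)}{n_X+n_Z}$, which is only $\rho/2$ even when $n_X=n_Z$ and degrades further otherwise --- not the $\rho$ the lemma states. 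So under the paper's literal definitions, the factor of two lands on the opposite arrow from the one the lemma asserts. This is almost certainly a normalization mismatch between the paper's paraphrase and Eldar-Harrow's original rather than a flaw in your reasoning, but it means the advertised constants will not drop out cleanly from the block-diagonal reduction without tracking those conventions precisely, and you should say explicitly which convention you are proving under.
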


This tells us that in order to construct CSS codes of good soundness, it is both necessary and sufficient for the component codes to have good soundness too. The main parameters of a quantum code of length $n$, dimension $K$ and distance $D$ are commonly written $[[n,K,D]]$, but we may also write $[[n,K,d_X,d_Z]]$ in situations where we want to specify both distances. The double square brackets distinguish quantum codes from classical codes, for which single square brackets are used to describe length, dimension and distance, respectively, for example $[t,k,d]$ denotes a classical code of length $t$, dimension $k$ and distance $d$. In both cases, we adopt the convention of writing soundness and locality separately.

\subsection{Distance Balancing Construction}

We will now define the homological product and describe how it may be used to balance the distances of quantum codes as in Theorem 4.2 of \cite{evra2022decodable}. Consider two chain complexes over $\mathbb{F}_2$, $X$ and $Y$. Suppose these each have spaces $(X_p)$ and $(Y_q)$, where $X_p = 0$ for $p < 0$ and $Y_q = 0$ for $q < 0$\footnote{This restriction is made because in this paper we are only interested in chains of finitely many terms.}, and differentials $\partial_p^X : X_p \rightarrow X_{p-1}$ and $\partial_p^Y : Y_p \rightarrow Y_{p-1}$. We may then define the homological product $X \times Y$ as the chain complex with spaces

\begin{equation}
    (X \times Y)_p = \bigoplus_{i=0}^p X_i \otimes Y_{p-i},
\end{equation}
and differentials $\partial_p^{X \times Y}$ acting on product elements $u \otimes v \in X_i \otimes Y_{p-i}$ via

\begin{equation}
    \partial_p^{X \times Y}(u \otimes v) = \left(\partial_i^X(u) \otimes v\right) \oplus \left(u \otimes \partial_{p-i}^Y(v)\right),
\end{equation}
and extended by linearity to the whole space. The distance balancing construction of \cite{evra2022decodable} then goes as follows. Consider a quantum code specified by the chain complex

\begin{equation}
    C = \left(\mathbb{F}_2^{n_Z}\overset{H_Z^T}{\longrightarrow}\mathbb{F}_2^{n}\overset{H_X}{\longrightarrow}\mathbb{F}_2^{n_X}\right),
\end{equation}
and a classical code specified by the chain complex
\begin{equation}
    R = \left(\mathbb{F}_2^t \overset{H}{\longrightarrow} \mathbb{F}_2^s\right),
\end{equation}
where $s \leq t$ and we require that the classical code has independent checks. We then take the homological product of Q with $R^* = \left(\mathbb{F}_2^s \overset{H^T}{\longrightarrow} \mathbb{F}_2^t\right)$ to obtain the chain complex in Figure~\ref{chainComplex}, where we state for the sake of clarity that the spaces of the resulting 4 - term chain are $C_3 = \mathbb{F}_2^{n_Z} \otimes \mathbb{F}_2^s$, $C_2 = \left(\mathbb{F}_2^{n_Z} \otimes \mathbb{F}_2^t\right) \oplus \left(\mathbb{F}_2^n \otimes \mathbb{F}_2^s\right)$, $C_1 = \left(\mathbb{F}_2^{n} \otimes \mathbb{F}_2^t\right) \oplus \left(\mathbb{F}_2^{n_X} \otimes \mathbb{F}_2^s\right)$ and $C_0 = \mathbb{F}_2^{n_X} \otimes \mathbb{F}_2^t$. We define the new quantum code from the last three terms: $C' = \left(C_2\overset{\partial_2}{\longrightarrow}C_1\overset{\partial_1}{\longrightarrow}C_0\right)$. 

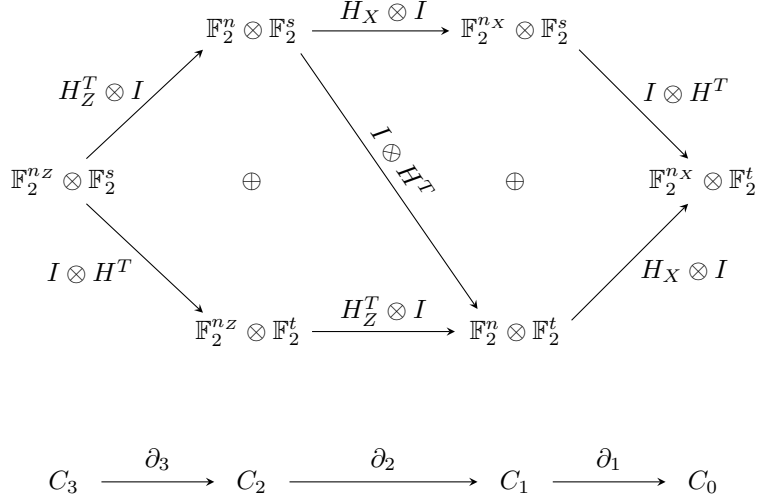
\begin{figure}[h]
\begin{center}
\begin{tikzpicture}

\node at (2.5,0) {$\mathbb{F}_2^{n_Z}\otimes \mathbb{F}_2^s$};
\node at (5,2) {$\mathbb{F}_2^{n}\otimes \mathbb{F}_2^s$};
\node at (8.5,2) {$\mathbb{F}_2^{n_X}\otimes \mathbb{F}_2^s$};
\node at (11,0) {$\mathbb{F}_2^{n_X}\otimes \mathbb{F}_2^t$};
\node at (4.95,-2) {$\mathbb{F}_2^{n_Z}\otimes \mathbb{F}_2^t$};
\node at (8.5,-2) {$\mathbb{F}_2^{n}\otimes \mathbb{F}_2^t$};
\node at (5,0) {$\oplus$};
\node at (8.5,0) {$\oplus$};

\draw [-stealth](2.8,0.3) -- (4.35,1.75);
\draw [-stealth](2.8,-0.3) -- (4.35,-1.75);
\draw [-stealth](5.8,2) -- (7.625,2);
\draw [-stealth](5.8,-2) -- (7.7,-2);
\draw [-stealth](9.35,1.75) -- (10.8,0.3);
\draw [-stealth](9.25,-1.85) -- (10.8,-0.3);
\draw [-stealth](5.65,1.7) -- (8,-1.7);

\node at (3,1.2) {$H_Z^T \otimes I$};
\node at (2.85,-1.2) {$I \otimes H^T$};
\node at (6.75,-1.725) {$H_Z^T \otimes I$};
\node at (6.75,2.25) {$H_X \otimes I$};
\node [rotate = -56] at (7,0.25) {$I \otimes H^T$};
\node at (10.78,1.2) {$I \otimes H^T$};
\node at (10.75,-1.2) {$H_X \otimes I$};

\node at (2.5,-4) {$C_3$};
\node at (5,-4) {$C_2$};
\node at (8.5,-4) {$C_1$};
\node at (11,-4) {$C_0$};

\draw [-stealth](3,-4) -- (4.5,-4);
\draw [-stealth](5.5,-4) -- (8,-4);
\draw [-stealth](9,-4) -- (10.5,-4);

\node at (3.75,-3.7) {$\partial_3$};
\node at (6.75,-3.7) {$\partial_2$};
\node at (9.75,-3.7) {$\partial_1$};

\end{tikzpicture}
\end{center}\caption{The chain complex relevant to the distance balancing construction.}\label{chainComplex}
\end{figure}

Theorem 4.2 of \cite{evra2022decodable} relates the distances and dimension of $C'$ to those of $C$ and $R$ as follows. 
\begin{lemma}[{\cite[Theorem 4.2]{evra2022decodable}}]In the above distance balancing construction, as long as the classical code has independent checks,
    \begin{align}
        \dim(C') &= \dim(C)\dim(\ker(H))\\
        d_X(C') &= d_X(C)d(\ker(H))\\
        d_Z(C') &= d_Z(C).
    \end{align}
\label{DBresults}\end{lemma}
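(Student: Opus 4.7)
The plan is to apply the algebraic Künneth theorem to the product complex $C \times R^*$ over the field $\mathbb{F}_2$ and extract distance bounds via coordinate projections of the matrix representations of cycles and cocycles in $C_1 = (\mathbb{F}_2^n \otimes \mathbb{F}_2^t) \oplus (\mathbb{F}_2^{n_X} \otimes \mathbb{F}_2^s)$. Independence of checks, i.e.\ surjectivity of $H$, enters at two places: it kills an unwanted Künneth summand, and it identifies the target of our projection with $\ker H$.

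For the dimension, observe that $H_1(C')$ depends only on $\partial_1$ and $\partial_2$, so it coincides with $H_1(C \times R^*)$ of the full 4-term product; likewise for cohomology. The Künneth theorem gives
\[
H_1(C \times R^*) \cong \bigl(H_1(C) \otimes H_0(R^*)\bigr) \oplus \bigl(H_0(C) \otimes H_1(R^*)\bigr),
\]
and surjectivity of $H$ makes $H^T$ injective, so $H_1(R^*) = 0$ and only the first summand survives. Thus $\dim(C') = \dim(C) \cdot \dim H_0(R^*) = \dim(C) \cdot \dim \ker H$, and dually one gets $H^1(C \times R^*) \cong H^1(C) \otimes \ker H$.

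For $d_Z(C')$, represent a cycle $w$ as a pair $(W_1, W_2) \in \mathbb{F}_2^{n \times t} \oplus \mathbb{F}_2^{n_X \times s}$; the cycle condition becomes $H_X W_1 = W_2 H$. For any $\phi \in \ker H$ one computes $H_X(W_1\phi) = W_2 H\phi = 0$, so $W_1\phi \in \ker H_X$, and adding boundaries only shifts $W_1\phi$ by an element of $\text{im}(H_Z^T)$, so $[W_1\phi]\in H_1(C)$ depends only on $[w]$. Under the identification $\ker H = (H_0(R^*))^*$, this is exactly the Künneth pairing with $\phi$, so when $[w]\ne 0$ some $\phi$ yields $[W_1\phi]\ne 0$ and hence $|W_1\phi|\ge d_Z(C)$; since $W_1\phi$ is a sum of columns of $W_1$, we conclude $|w| \ge |w_1| \ge |W_1\phi| \ge d_Z(C)$. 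The matching upper bound is $v \otimes e_i$ with $v$ a min-weight $Z$-logical of $C$ and $e_i$ a coordinate vector of $\mathbb{F}_2^t$ outside $\text{im}(H^T)$ (which must exist whenever $\dim\ker H > 0$, since otherwise the coordinate vectors would span a strict subspace of themselves).

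For $d_X(C')$, a cocycle $(W_1, W_2)$ satisfies $H_Z W_1 = 0$ and $W_1 H^T = H_X^T W_2$, so each column $c_k$ of $W_1$ lies in $\ker H_Z$ and has a class $[c_k]\in H^1(C)$ depending only on $[w]$ (coboundary additions move $c_k$ only within $\text{im}(H_X^T)$). Writing $[w] = \sum_i [v_i] \otimes u_i$ with $\{[v_i]\}$ linearly independent in $H^1(C)$ and $u_i\in\ker H$ nonzero, the simple-tensor representative gives $[c_k] = \sum_i u_{i,k}[v_i]$ for every representative cocycle, so the set $\{k : [c_k]\ne 0\}$ equals $\bigcup_i \text{supp}(u_i)$, of size at least $\max_i |u_i| \ge d(\ker H)$. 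Each such column has weight at least $d_X(C)$, yielding $|w| \ge |w_1| \ge d_X(C) \cdot d(\ker H)$; the matching upper bound is the rank-one cocycle $v\otimes u$ with $v$ a min-weight $X$-cocycle and $u$ a min-weight element of $\ker H$. The main obstacle is this $d_X$ lower bound: unlike the $d_Z$ case, a single projection cannot recover the multiplicative factor $d(\ker H)$, so one has to argue column-wise and verify that ``nontrivial column count'' is an invariant of $[w]$ bounded below by $d(\ker H)$, which is exactly where independence of checks ensures $\ker H$ (rather than some weight-one-friendly quotient) appears as the Künneth factor.
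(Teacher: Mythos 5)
The paper does not prove this lemma at all: it is stated as Lemma~\ref{DBresults} and attributed to Theorem~4.2 of Evra, Kaufman and Z\'emor~\cite{evra2022decodable}, with no proof reproduced. So there is no internal proof to compare against, and your argument should be judged on its own terms.

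On its own terms, your K\"unneth-based argument is essentially correct and complete. The dimension count via $H_1(C\times R^*)\cong H_1(C)\otimes H_0(R^*)$, with independence of checks entering precisely to kill $H_1(R^*)=\ker H^T$, is right, and you have the correct matrix form of the cycle and cocycle conditions: $H_XW_1 = W_2H$ for cycles, and $H_ZW_1=0$, $W_1H^T=H_X^TW_2$ for cocycles. The two lower bounds are also sound. For $d_Z$, contracting against $\phi\in\ker H$ is a well-defined map on homology (boundary additions change $W_1\phi$ only by $H_Z^T(A_1\phi)$ since $A_2H\phi=0$), and perfectness of the pairing $\mathbb{F}_2^t/\mathrm{im}\,H^T\times\ker H\to\mathbb{F}_2$ guarantees a nonzero image in $H_1(C)$. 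For $d_X$, the column classes $[c_k]\in H^1(C)$ are well-defined because coboundary additions move columns only by $\mathrm{im}\,H_X^T$, and writing $[w]=\sum_i[v_i]\otimes u_i$ with $\{[v_i]\}$ independent gives $\{k:[c_k]\neq0\}=\bigcup_i\mathrm{supp}(u_i)$, of size at least $d(\ker H)$; the column-wise product lower bound and the rank-one upper bound $vu^T$ then match. Two small points worth tightening: the remark that a standard basis vector outside $\mathrm{im}\,H^T$ exists ``since otherwise the coordinate vectors would span a strict subspace of themselves'' is garbled --- the right statement is simply that if every $e_i$ were in $\mathrm{im}\,H^T$ then $\mathrm{im}\,H^T=\mathbb{F}_2^t$, contradicting $\dim\mathrm{im}\,H^T=s<t$. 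And the sentence claiming $[W_1\phi]$ ``is exactly the K\"unneth pairing with $\phi$'' deserves the one-line verification on simple tensors (for $W_1=zv^T$, $W_1\phi=(v^T\phi)z$), since that is the step that upgrades ``well-defined map'' to ``detects all nonzero classes.'' Neither is a real gap.
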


Let us finally note that the locality of the resultant quantum code will scale with the maximum of the localities of the original quantum code and the classical code.

\section{Statements and Proofs of Results}
We will repeatedly use the following simple fact in the upcoming proofs.

\begin{fact}
If $a \geq b_1, ..., b_k$ then $a \geq \sum_{i=1}^k \lambda_ib_i$ for any $(\lambda_i)_{i=1}^k$ with $\sum_{i=1}^k\lambda_i = 1$.
\label{fact}
\end{fact}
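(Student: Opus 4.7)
The statement is a one-line elementary fact, so the ``plan'' is really just to identify the implicit hypothesis and then do the two-step arithmetic. The plan is to read ``$(\lambda_i)_{i=1}^k$'' as a convex combination, i.e.\ with the implicit assumption $\lambda_i \geq 0$ for all $i$; otherwise the statement is literally false (take $k=2$, $a=1$, $b_1=1$, $b_2=0$, $\lambda_1=2$, $\lambda_2=-1$, giving $\sum \lambda_i b_i = 2 > a$). Since the only later use of this fact will be in estimates where the coefficients are ratios of check counts, sizes of syndromes, or weights, all of which are nonnegative, this convention is harmless.

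Assuming nonnegativity, the proof is two lines. First I would multiply each of the hypotheses $a \geq b_i$ by the corresponding $\lambda_i \geq 0$ to obtain $\lambda_i a \geq \lambda_i b_i$. Then I would sum over $i$ from $1$ to $k$, yielding
\begin{equation}
a \sum_{i=1}^k \lambda_i \;\geq\; \sum_{i=1}^k \lambda_i b_i,
\end{equation}
and invoke $\sum_i \lambda_i = 1$ to collapse the left-hand side to $a$.

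There is no real obstacle; the only thing worth flagging for the reader is the nonnegativity assumption, which I would mention in a parenthetical so that later applications of Fact \ref{fact} cannot be misread. If the authors wished to avoid even this implicit hypothesis, a slightly stronger statement in the same spirit is: if $a \geq \max_i b_i$ and $\lambda_i \geq 0$ with $\sum_i \lambda_i = 1$, then $a \geq \sum_i \lambda_i b_i$; this is the form I would actually use in later proofs.
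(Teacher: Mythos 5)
Your proof is correct and is the obvious one the authors intend; the paper states Fact~\ref{fact} without proof, and the two-line argument (multiply each $a \geq b_i$ by $\lambda_i$ and sum) is exactly what is meant. Your observation that nonnegativity of the $\lambda_i$ must be assumed is a genuine, if minor, catch: as literally stated the fact is false, but every invocation in the paper uses nonnegative weights (uniform weights $1/s$, the pair $\frac{1}{s+1}, \frac{s}{s+1}$, and the pair $\frac{s}{t}, 1-\frac{s}{t}$ with $s \le t$), so nothing downstream is affected.
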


Additionally, we will make use of the following simple observation. In order to show that a classical code with parity-check matrix $\partial : \mathbb{F}_2^a \rightarrow \mathbb{F}_2^b$ has soundness at least $\rho$, it will suffice to show that for any word $x \in \mathbb{F}_2^a$, $\frac{|\partial x|}{b} \geq \frac{d\left(x,\ker(\partial)\right)}{a}$, as is required by definition. However, in many cases, we will begin by considering some conveniently chosen $c \in \ker(\partial)$ and stating that without loss of generality, we may consider $x + c$ in place of $x$. This is of course allowed because $\partial(x) = \partial(x+c)$ and $d\left(x,\ker(\partial)\right) = d\left(x + c,\ker(\partial)\right)$. Sometimes, these kernel elements may come from images of other maps in the chain complex. For example, to analyse the soundness of the code with parity-check matrix $\partial_1$ in Figure \ref{chainComplex}, one may consider elements $x \in C_1$ and without loss of generality add elements of $c \in \text{im}(\partial_2)$. Note that elements of $\text{im}(\partial_2)$ are called 1-boundaries, and elements of $\text{im}(\partial_1^T)$ are called 1-coboundaries. Moreover, elements of $\ker(\partial_1)$ are called 1-cycles and elements of $\ker(\partial_2^T)$ are called 1-cocycles.

\subsection{General Distance Balancing}\label{genDB}

We will now prove Theorem \ref{genDBthm}. This is done in two separate lemmas. First, in Lemma \ref{kerd2tSoundness}, we will address the soundness of the X - operators i.e. the code with parity-check matrix $\partial_2^T$, after which, in Lemma \ref{kerd1Soundness}, we will discuss the soundness of the Z - operators i.e. the code with parity-check matrix $\partial_1$. While both proofs are self-contained, we recommend that they be read in the given order, as certain ideas that reoccur are presented in greater detail in the first proof than in the second. These results will be brought together into the proof of Theorem \ref{genDBthm} at the end of this section. 
\begin{lemma}
Consider the chain complex depicted in Figure \ref{chainComplex}. Suppose the code defined by the parity-check matrix $H_Z$ has soundness $\rho_Z$ and the parity-check matrix $H$ has independent checks. Then the code defined by the parity-check matrix $\partial_2^T$ has soundness at least

\begin{equation}
    \frac{1}{s+1}\min\left(\frac{n_Z\rho_Z}{n},1\right)\frac{nt+n_Xs}{n_Zt+ns}.
\end{equation}
\label{kerd2tSoundness}
\end{lemma}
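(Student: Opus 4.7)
The soundness inequality to prove is equivalent to showing that for every $(B, A) \in C_1 \cong \mathbb{F}_2^{n_X \times s} \oplus \mathbb{F}_2^{n \times t}$,
\[
|\partial_2^T(B, A)| \ge \tfrac{1}{s+1}\min\!\bigl(\tfrac{n_Z\rho_Z}{n},\, 1\bigr)\, d\!\bigl((B,A),\,\ker\partial_2^T\bigr),
\]
where $\partial_2^T(B,A) = (H_X^T B + A H^T,\, H_Z A)$. My first move would be a coboundary reduction. Since $\operatorname{im}\partial_1^T \subseteq \ker\partial_2^T$, adding any 1-coboundary to $(B,A)$ preserves both the syndrome and the distance. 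Using the surjectivity of $H$ (from the independent-checks hypothesis), for any $B$ there is some $S \in \mathbb{F}_2^{n_X \times t}$ with $S H^T = B$; adding the coboundary $\partial_1^T(S) = (SH^T,\, H_X^T S)$ then kills the first component. The identity $H_Z H_X^T = (H_X H_Z^T)^T = 0$ ensures the second syndrome is also unaffected. Hence I may assume without loss of generality that $B = 0$, leaving the syndrome $(\sigma, \tau) = (A H^T,\, H_Z A)$.

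The bulk of the proof would construct an element $(B', A') \in \ker\partial_2^T$ close to $(0, A)$, decomposing the distance bound into $s + 1$ contributions. Write $\sigma_i := A h_i$ for the $i$-th column of $\sigma$, where $h_i$ is the $i$-th row of $H$, and $\tau_j := H_Z a_j$ for the $j$-th column of $\tau$, where $a_j$ is the $j$-th column of $A$. The zeroth contribution $E_0$ handles the second syndrome: by $H_Z$-soundness there is, for each $j$, a codeword $a_j' \in \ker H_Z$ with $|a_j - a_j'| \le \tfrac{n}{n_Z \rho_Z}|\tau_j|$, so $A_1 := [a_1'|\cdots|a_t']$ satisfies $|A - A_1| \le \tfrac{n}{n_Z\rho_Z}|\tau|$. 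The remaining $s$ contributions $E_1, \ldots, E_s$ handle the first syndrome column by column, using surjectivity of $H$: picking $x_i \in \mathbb{F}_2^t$ with $H x_i = e_i$ allows me to inject rank-one corrections of the form $v \otimes x_i$ that affect only the $i$-th column of the resulting first syndrome. The identity $H_Z \sigma_i = \tau h_i$ and the inclusion $\operatorname{im} H_X^T \subseteq \ker H_Z$ then let me bound each per-column cost by the better of $\tfrac{n}{n_Z\rho_Z}|\sigma_i|$ (from $H_Z$-soundness) and $|\sigma_i|$ (a trivial bound), yielding $E_i \le \max\!\bigl(\tfrac{n}{n_Z\rho_Z}, 1\bigr)|\sigma_i|$.

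The endgame would be to apply Fact~\ref{fact} with uniform weights $\lambda_k = \tfrac{1}{s+1}$: since $d((B,A), \ker\partial_2^T) \le \sum_{k=0}^{s}E_k$ and each $E_k \le \max\!\bigl(\tfrac{n}{n_Z\rho_Z}, 1\bigr)|\partial_2^T(B,A)|$—using that $|\tau|$ and each $|\sigma_i|$ are bounded by the full syndrome weight—averaging yields $|\partial_2^T(B,A)| \ge \tfrac{1}{s+1}\min\!\bigl(\tfrac{n_Z\rho_Z}{n}, 1\bigr)\,d((B,A), \ker\partial_2^T)$, as desired. The $\min$ reflects the two regimes of the per-piece bound, and the factor $\tfrac{1}{s+1}$ arises from the number of contributions. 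The main technical obstacle I foresee lies in the second paragraph: the per-column corrections must be constructed so that their weights depend only on $|\sigma_i|$ and not on, for example, the weights of the right inverses $x_i$ or of $H$ itself, since the final bound is insensitive to those quantities. I expect this is handled by a careful use of the identity $H_Z \sigma_i = \tau h_i$ together with $H_Z H_X^T = 0$, so that each correction stays within subspaces on which the $H_Z$-soundness (rather than any analogous property of $H$) alone suffices.
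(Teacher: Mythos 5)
Your first reduction (kill the $\mathbb{F}_2^{n_X}\otimes\mathbb{F}_2^s$ component by a coboundary $\partial_1^T(S)$, using surjectivity of $H$) matches the paper exactly. After that, however, the approaches diverge: you try to \emph{upper-bound} $d\bigl((0,A),\ker\partial_2^T\bigr)$ by explicitly constructing a nearby kernel element through a zeroth correction plus $s$ per-column corrections, whereas the paper further reduces the word by adding an element of $\ker\partial_2^T$ of the form $\sum_{i>s} z_i\otimes y_i$ (with $z_i\in\ker H_Z$ and $y_i\in\ker H$ chosen via the invertible $s\times s$ block $H'$ of $H^T$) and then \emph{lower-bounds both syndrome components directly}. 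The crucial point is that the paper never needs to ``invert'' $H$: the nonsingularity of $H'$ is used only to select linear combinations of the already-produced syndrome blocks $\sum_{i\in\operatorname{supp}(H^T)_{:,j}}u_i$, which costs nothing in weight.

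This is where your proposal has a genuine gap, and it is the one you yourself flag as the main obstacle. Your per-column corrections are rank-one terms $v\otimes x_i$ with $Hx_i = e_i$; such a correction has weight $|v|\cdot|x_i|$, and $|x_i|$ is completely uncontrolled (it can be $\Theta(t)$ for a generic full-rank $H$), so the cost $E_i$ cannot be bounded by any function of $|\sigma_i|$ alone. Moreover, the specific bound you claim, $E_i \le \frac{n}{n_Z\rho_Z}|\sigma_i|$ ``from $H_Z$-soundness,'' is not what soundness gives: soundness controls $d(\sigma_i,\ker H_Z)$ by $\frac{n}{n_Z\rho_Z}\bigl|H_Z\sigma_i\bigr|$, and by your own identity $H_Z\sigma_i = \tau h_i$, so the right-hand side involves $|\tau|$, not $|\sigma_i|$. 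Indeed, once the zeroth correction has set $\tau=0$, every remaining $\sigma_i$ already lies in $\ker H_Z$ and soundness returns the bound $0$, telling you nothing. What you would actually need at that point is control of the distance from a $\ker H_Z$ element to $\operatorname{im} H_X^T$, which is an $X$-distance statement, not a soundness statement, and it does not follow from the hypotheses. You also do not track how $|A_1H^T|$ compares to $|AH^T|$ after the zeroth correction, which can inflate by a factor of up to $s$. As written, the second and third paragraphs therefore do not constitute a proof, and I do not see how to repair the construction-of-a-nearby-codeword strategy without effectively rediscovering the paper's reduction; I would recommend switching to the direct syndrome lower bound.
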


\underline{Comment:}
Under the reasonable assumptions that $\frac{n_Z\rho_Z}{n} \leq 1$, that $s$ scales linearly with $t$ and that $n_Z$ and $n_X$ scale linearly with $n$, this simply states that the X - operators have soundness

\begin{equation}
    \Omega\left(\frac{\rho_Z}{t}\right).
\end{equation}

\begin{proof}
Consider some $x \in C_1$. Write $x = (\alpha,\beta)$ for $\alpha \in \mathbb{F}_2^n \otimes \mathbb{F}_2^t$ and $\beta \in \mathbb{F}_2^{n_X} \otimes \mathbb{F}_2^s$. We will denote the standard basis for $\mathbb{F}_2^s$ as $\tilde{E} = \{e_1, ..., e_s\}$, for $\mathbb{F}_2^t$ as $\tilde{V} = \{v_1, ..., v_t\}$ and for $\mathbb{F}_2^{n_X}$ as $\{\delta_1, ..., \delta_{n_X}\}$. We may then write $\beta = \sum_{\gamma=1}^{n_X} \delta_\gamma \otimes b_\gamma$ for some $b_\gamma \in \mathbb{F}_2^s$. Since the checks (rows) of $H$ are independent, the matrix $H$ is surjective and so there exists $a_\gamma \in \mathbb{F}_2^t$ such that $H(a_\gamma) = b_\gamma$. Then, consider $\sum_{\gamma=1}^{n_X} \delta_\gamma \otimes a_\gamma \in \mathbb{F}_2^{n_X} \otimes \mathbb{F}_2^t$. $\partial_1^T(\sum_{\gamma=1}^{n_X} \delta_\gamma \otimes a_\gamma) = (\star, \beta)$ for some element $\star \in \mathbb{F}_2^n \otimes \mathbb{F}_2^t$ that is not important to us. Thus, by adding the 1-coboundary $(\star,\beta)$ onto $x$, we see that without loss of generality, we may assume that $x = (\alpha,0)$ for some $\alpha \in \mathbb{F}_2^n \otimes \mathbb{F}_2^t$.
\\

Now let us write $\alpha = \sum_{i=1}^t u_i \otimes v_i$ for some $u_i \in \mathbb{F}_2^n$. Let us consider the matrix $H^T$. Since the parity-check matrix $H$ has independent checks, the columns of $H^T$ are linearly independent. Therefore, there exists some partition $\tilde{V} = \tilde{V}' \cup \tilde{V}''$ such that the sub-matrix $\left(H^T\right)|_{\tilde{V}' \times \tilde{E}}$ is square and non-singular. By re-labelling the basis elements in $\tilde{V}$, we may say that without loss of generality, $\tilde{V}' = \{v_1, ..., v_s\}$\footnote{We can also go without this, but this is best for ease of notation.}. The matrix $H^T$ therefore looks as in Figure \ref{HT}, where we have denoted the submatrices $H'$ and $H''$ as shown, and we emphasise that the submatrix $H'$ is square and non-singular.
\\
\begin{figure}[h]
\begin{center}
\begin{tikzpicture}
\node at (0,0.5) {$H^T = $};
\node at (2.5,-0.5) {$H'$};
\node at (2.5,-2.75) {$H''$};
\node at (0.35, -0.45) {$\tilde{V}'$};
\node at (0.35, -2.7) {$\tilde{V}''$};
\node at (2.54, 1.5) {$\tilde{E}$};
\node at (4.375,-0.525) {$s$};
\node at (4.7,-2.81) {$t-s$};
\node at (2.4,-3.85) {$s$};
\draw (1,1) rectangle (4,-2);
\draw (1,-2) rectangle (4,-3.5);
\draw [stealth-stealth](4.15,0.99)--(4.15,-1.99);
\draw [stealth-stealth](4.15,-3.49)--(4.15,-2.01);
\draw [stealth-stealth](1.01,-3.65)--(3.99,-3.65);
\node at (5.5,0) {\phantom{}};

\draw [decorate, decoration = {brace}, thick] (0.9,-1.975) --  (0.9,0.975);
\draw [decorate, decoration = {brace}, thick] (0.9,-3.475) --  (0.9,-2.025);
\draw [decorate, decoration = {brace}, thick] (1.025,1.1) --  (3.975,1.1);
\end{tikzpicture}
\end{center}\caption{The relevant dimensions and submatrices $H' = \left(H^T\right)|_{\tilde{V}' \times \tilde{E}}$ and $H'' = \left(H^T\right)|_{\tilde{V}'' \times \tilde{E}}$ of the matrix $H^T$.}\label{HT}
\end{figure}
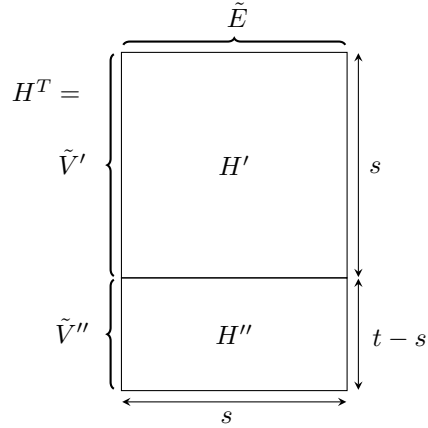

Codewords in $\ker(H)$ correspond to linear combinations of columns of $H$ that sum to the zero vector. These are therefore linear combinations of rows of $H^T$ that sum to the zero vector. The submatrix $H'$ is non-singular, and so there is some subset of the rows in $H'$ that may be summed to form any vector. In particular, for any row in $H''$, there is a linear combination of the rows of $H'$ that may be summed to form the row in $H''$. Therefore, for each $i \in \{s+1, ..., t\}$, there exists a codeword $y_i \in \ker(H)$ such that the restriction of $y_i$ to the coordinates in $\tilde{V}''$ equals

\begin{equation}
    \left(y_i|_{\tilde{V}''}\right)_j = \begin{cases}
1, \;\;\;\; j=i\\
0, \;\;\;\; j\neq i
\end{cases}
\end{equation}
i.e. on the coordinates $\tilde{V}''$, the codeword $y_i$ is supported in the i-th place and nowhere else. Additionally, for each $i \in \{s+1, ..., t\}$, we may also consider $z_i \in \ker(H_Z)$ such that $\left|u_i + z_i\right| = d\left(u_i, \ker(H_Z)\right)$. Then, the element $\left(\sum_{i=s+1}^t z_i \otimes y_i, 0\right) \in C_1$ is in $\ker(\partial_2^T)$. We may then add this element to our $x$ to say that without loss of generality, with $x = (\alpha,0)$ and $\alpha = \sum_{i=1}^t u_i \otimes v_i$, we have that for $i \in \{s+1, ..., t\}$, $|u_i| = d\left(u_i, \ker(H_Z)\right)$.
\\

We may now consider

\begin{equation}
    \partial_2^T(x) = \left(\sum_{i=1}^t H_Z(u_i) \otimes v_i, \sum_{i=1}^t u_i \otimes H(v_i)\right).
\end{equation}
First, we consider the weight of $\partial_2^T(x)$ on the space $\mathbb{F}_2^{n_Z} \otimes \mathbb{F}_2^t$, i.e. the weight of $\sum_{i=1}^t H_Z(u_i) \otimes v_i$. We have

\begin{align}
    \left|\sum_{i=1}^t H_Z(u_i) \otimes v_i\right| &= \sum_{i=1}^t\left|H_Zu_i\right|\\
    &\geq \sum_{i=s+1}^t \left|H_Zu_i\right|\\
    &\geq \frac{n_Z\rho_Z}{n}\sum_{i=s+1}^t \left|u_i\right|\label{leftWeight}
\end{align}
using the soundness of $H_Z$ going into the last line. Now let us consider the weight of $\partial_2^T(x)$ on the space $\mathbb{F}_2^n \otimes \mathbb{F}_2^s$ i.e. the weight of $\sum_{i=1}^t u_i \otimes H(v_i)$. Denoting the p-th row of a matrix A as $(A)_{p,:}$ and the q-th column of the same matrix as $(A)_{:,q}$, we have that 

\begin{equation}
H(v_i) = \sum_{j \in \text{supp}\left(H\right)_{:,i}}e_j
\end{equation}
i.e. $H(v_i)$ is nothing more than the i-th column of $H$. Then,

\begin{align}
    \sum_{i=1}^t u_i \otimes H(v_i) &= \sum_{i=1}^t u_i \otimes \left(\sum_{j \in \text{supp}\left(H\right)_{:,i}}e_j\right)\\
    &=\sum_{j=1}^s\left(\sum_{i \in \text{supp }\left(H\right)_{j,:}}u_i\right)\otimes e_j\\
    &=\sum_{j=1}^s\left(\sum_{i \in \text{supp }\left(H^T\right)_{:,j}}u_i\right)\otimes e_j
\end{align}
whose weight is then

\begin{equation}
    \left|\sum_{i=1}^t u_i \otimes H(v_i)\right| = \sum_{j=1}^s \left|\sum_{i \in \text{supp}\left(H^T\right)_{:,j}}u_i\right|.
\end{equation}
We note that each $\sum_{i \in \text{supp }\left(H^T\right)_{:,j}}u_i$ is a linear combination of $u_i$ corresponding to the j-th column of $H^T$. Consider a vector L that is some linear combination of columns of $H^T$. Using the triangle inequality, we have that

\begin{equation}
    \left|\sum_{i=1}^t u_i \otimes H(v_i)\right| \geq \left| \sum_{i \in \text{supp }L}u_i\right|
\end{equation}
for every such L. Since the submatrix $H'$ is non-singular, for every $j \in \{1, ..., s\}$, there is a linear combination of columns of $H^T$, call it $L_j$, whose restriction to $\tilde{V}'$ is supported in the j-th position and nowhere else, i.e.

\begin{equation}
    \left(L_j|_{\tilde{V}'}\right)_i = \begin{cases}
1, \;\;\;\; i=j\\
0, \;\;\;\; i\neq j
\end{cases}.
\end{equation}
As such, for every $j \in \{1, ..., s\}$,

\begin{equation}
    \left|\sum_{i=1}^t u_i \otimes H(v_i)\right| \geq \left|u_j + p_j\right|
\end{equation}
where $p_j$ is some linear combination of $u_{s+1}, ..., u_t$. Explicitly, $p_j = \sum_{i \in \text{supp }\left(L_j\right)|_{\tilde{V}''}}u_i$. Then, using Fact \ref{fact}, we get that

\begin{equation}
    \left|\sum_{i=1}^t u_i \otimes H(v_i)\right| \geq \frac{\sum_{j=1}^s \left|u_j + p_j\right|}{s}\label{rightWeight}
\end{equation}
and so we get, combining Equations \eqref{leftWeight} and \eqref{rightWeight}, that

\begin{align}
    \left|\partial_2^T(x)\right| &\geq \frac{\sum_{j=1}^s \left|u_j + p_j\right|}{s} + \frac{n_Z\rho_Z}{n}\sum_{i=s+1}^t \left|u_i\right|\\
    &\geq \min\left(\frac{n_Z\rho_Z}{n},1\right)\left(\frac{\sum_{j=1}^s \left|u_j + p_j\right|}{s} + \sum_{i=s+1}^t \left|u_i\right|\right)\\
    &= \min\left(\frac{n_Z\rho_Z}{n},1\right)\left(\frac{\sum_{j=1}^s\left(\left|u_j+p_j\right| + \sum_{i=s+1}^t \left|u_i\right|\right)}{s}\right)\\
    &\geq \min\left(\frac{n_Z\rho_Z}{n},1\right)\left(\frac{\sum_{j=1}^s\left|u_j\right|}{s}\right)
\end{align}
where, going into the last line, we have used the triangle inequality as necessary to delete the $u_i$ in $p_j$. Explicitly, this may be written as

\begin{align}
    \left|\partial_2^T(x)\right| &\geq \min\left(\frac{n_Z\rho_Z}{n},1\right)\left(\frac{\sum_{j=1}^s\left(\left|u_j+p_j\right| + \sum_{i=s+1}^t \left|u_i\right|\right)}{s}\right)\\
    &\geq \min\left(\frac{n_Z\rho_Z}{n},1\right)\left(\frac{\sum_{j=1}^s\left(\left|u_j+p_j\right| + \sum_{i \in \text{supp }\left(L_j\right)|_{\tilde{V}''}}\left|u_i\right|\right)}{s}\right)\\
    &\geq \min\left(\frac{n_Z\rho_Z}{n},1\right)\left(\frac{\sum_{j=1}^s\left(\left|u_j+p_j\right| + \left|p_j\right|\right)}{s}\right)\\
    &\geq \min\left(\frac{n_Z\rho_Z}{n},1\right)\left(\frac{\sum_{j=1}^s\left|u_j\right|}{s}\right)\label{weightV'}
\end{align}
where we have restricted the sum $\sum_{i=s+1}^t \left|u_i\right|$ going into the second line and then used the triangle inequality going into the third and fourth lines. Finally, using Equations \eqref{leftWeight} and \eqref{weightV'}, along with Fact \ref{fact}, we have

\begin{equation}
    \left|\partial_2^T(x)\right| \geq \min\left(\frac{n_Z\rho_Z}{n},1\right)\left(\frac{\sum_{i=1}^t\left|u_i\right|}{s+1}\right)
\end{equation}
which gives us

\begin{align}
    \frac{\left|\partial_2^T(x)\right|}{n_Zt + ns} &\geq \frac{1}{s+1}\min\left(\frac{n_Z\rho_Z}{n},1\right)\left(\frac{nt+n_Xs}{n_Zt+ns}\right)\frac{\sum_{i=1}^t\left|u_i\right|}{nt+n_Xs}\\
    &= \frac{1}{s+1}\min\left(\frac{n_Z\rho_Z}{n},1\right)\left(\frac{nt+n_Xs}{n_Zt+ns}\right)\frac{|x|}{nt+n_Xs}\\
    &\geq \frac{1}{s+1}\min\left(\frac{n_Z\rho_Z}{n},1\right)\left(\frac{nt+n_Xs}{n_Zt+ns}\right)\frac{d\left(x,\ker(\partial_2^T)\right)}{nt+n_Xs}
\end{align}
from which we extract the claimed soundness.
\end{proof}

\begin{lemma}
Consider the chain complex depicted in Figure \ref{chainComplex}. Suppose the code defined by the parity-check matrix $H_X$ has soundness $\rho_X$ and the parity-check matrix H has independent checks. Then the code defined by the parity-check matrix $\partial_1$ has soundness at least

    \begin{equation}
        \frac{1}{t}\min\left(\frac{n_X\rho_X}{n},1\right)\frac{nt+n_Xs}{n_Xt}.
    \end{equation}

\label{kerd1Soundness}
\end{lemma}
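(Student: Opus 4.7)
The plan is to mirror the structure of the proof of Lemma \ref{kerd2tSoundness}, with modifications reflecting the fact that $\partial_1$ maps into a single tensor space $C_0 = \mathbb{F}_2^{n_X}\otimes \mathbb{F}_2^t$ rather than a direct sum of two components.

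First I would write $x = (\alpha, \beta) \in C_1$ as $\alpha = \sum_i u_i \otimes v_i$ and $\beta = \sum_j w_j \otimes e_j$, and reduce using two kinds of elements of $\ker(\partial_1)$. Elements of the form $(\sum_i k_i \otimes v_i, 0)$ with $k_i \in \ker(H_X)$ lie in $\ker(\partial_1)$, and adding them lets me assume without loss of generality that $|u_i| = d(u_i, \ker(H_X))$ for each $i$. Separately, the 1-boundaries $\partial_2(0, \delta) = ((I \otimes H^T)\delta, (H_X \otimes I)\delta)$ shift each $w_j$ by any element of $\text{im}(H_X)$ while simultaneously shifting the $u_i$'s; using these, and then re-applying the first reduction, I can further assume $|w_j| = d(w_j, \text{im}(H_X))$.

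Exactly as in Lemma \ref{kerd2tSoundness}, the independence of the rows of $H$ produces a partition $\tilde{V} = \tilde{V}' \cup \tilde{V}''$ with $|\tilde{V}'| = s$ such that $H' = H^T|_{\tilde{V}' \times \tilde{E}}$ is square and non-singular. A direct computation then gives
\begin{equation*}
    \partial_1 x = \sum_{i=1}^t [H_X(u_i) + c_i] \otimes v_i, \qquad c_i = \sum_{j: H_{j,i}=1} w_j,
\end{equation*}
so that $|\partial_1 x| = \sum_{i=1}^t |H_X(u_i) + c_i|$. I would then split the sum over $i \in \tilde{V}'$ and $i \in \tilde{V}''$. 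On $\tilde{V}''$, the reduction and soundness of $H_X$ give $|H_X(u_i)| \geq \frac{n_X \rho_X}{n}|u_i|$, which together with a triangle-inequality estimate for the $c_i$ term controls $\sum_{i \in \tilde{V}''}|u_i|$. On $\tilde{V}'$, the invertibility of $H'$ makes the linear map $W \mapsto (c_i)_{i \in \tilde{V}'}$ invertible, giving the column bound $\sum_{i \in \tilde{V}'}|c_i| \geq \frac{1}{s}\sum_j |w_j|$, which after a triangle-inequality correction for the $H_X(u_i)$ term controls $\sum_j |w_j|$. Combining the two bounds via Fact \ref{fact} then yields the claimed soundness, with the factor $\frac{1}{t}$ arising naturally from averaging the $u$- and $w$-contributions across all $t$ columns of the syndrome.

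The main obstacle, absent in Lemma \ref{kerd2tSoundness}, is precisely the collapsed target space: previously the two parts of the syndrome $\partial_2^T x$ lived in different direct summands, so the soundness-of-$H_Z$ contribution and the $H$-structure contribution could be bounded independently, whereas here both contributions share the same component of $C_0$ and can interfere destructively. The delicate work is therefore in the triangle-inequality bookkeeping: the reductions must be set up so that the cancellations between $H_X(u_i)$ and $c_i$ do not spoil either of the two lower bounds, and so that their combination through Fact \ref{fact} cleanly absorbs the correction terms to yield the asymptotic $\Omega(\rho_X/t)$ soundness claimed.
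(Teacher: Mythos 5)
Your proposal shares much of the paper's machinery (the partition $\tilde V = \tilde V' \cup \tilde V''$ from the invertibility of $H' = (H^T)|_{\tilde V' \times \tilde E}$, the use of soundness of $H_X$ on the $\tilde V''$ part, and Fact~\ref{fact} for combining), but the choice of \emph{which} elements of $\ker(\partial_1)$ to add in the WLOG reduction is genuinely different, and this is where the proposal breaks down. The paper uses the 1-boundary $\left(\sum_{i=1}^s u_i\otimes c_i,\star\right)\in\text{im}(\partial_2)$ — made possible by the invertibility of $H'$ — to \emph{zero out} $u_i$ for all $i\in\tilde V'$. Consequently, the syndrome component on $\tilde V'$ is exactly $y_i'$ with no $H_X u_i$ admixture, and $|x|$ has no $\sum_{i\in\tilde V'}|u_i|$ contribution at all. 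You instead minimize $|u_i|$ over $\ker(H_X)$-cosets for every $i$ and minimize $|w_j|$ over $\text{im}(H_X)$-cosets. These two reductions are mutually exclusive with the paper's: shifting $w_j$ by an arbitrary element of $\text{im}(H_X)$ uses precisely the same boundaries $\partial_2(0,\delta)$ that the paper spends to force $u_i = 0$ on $\tilde V'$, so you cannot have both.

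The gap this leaves is twofold, and in my view not repairable along the lines you sketch. First, $|x| = \sum_{i=1}^t|u_i| + \sum_j|w_j|$ now contains the term $\sum_{i\in\tilde V'}|u_i|$, and nothing in your argument lower-bounds $|\partial_1 x|$ against it — you never mention it. Second, the ``triangle-inequality bookkeeping'' on $\tilde V''$ needs to absorb the $-|c_i|$ corrections. The paper does this via the clean identity $\sum_{i\in\tilde V'}|y_i'| \geq \frac{1}{t-s}\sum_{i\in\tilde V''}|y_i''|$, which holds because the $\tilde V'$-components of the syndrome are \emph{exactly} $y_i'$. In your setup the $\tilde V'$-components are $H_X u_i + c_i$; any linear combination of them designed to reproduce $c_i$ for $i\in\tilde V''$ picks up an unwanted $H_X(\sum a_{i'}u_{i'})$. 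When you then combine with the $\tilde V''$-term $|H_X u_i + c_i|$ you obtain a bound of the form $|\partial_1 x| \gtrsim |H_X(u_i + r)|$ with $r = \sum a_{i'}u_{i'}$, and $d(u_i + r, \ker(H_X))$ bears no relation to $|u_i| = d(u_i,\ker(H_X))$ — indeed $u_i + r$ can lie in $\ker(H_X)$ while $u_i$ does not. Your reduction $|w_j| = d(w_j,\text{im}(H_X))$ is a nice observation that does give a clean bound $\sum_{i\in\tilde V'}|H_X u_i + c_i| \geq |w_j|$ via $\bigl|\,w_j + H_X(\cdot)\,\bigr| \geq |w_j|$, so the $\sum_j|w_j|$ side survives; but the $u_i$ side does not. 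The fix is to adopt the paper's boundary reduction (zero $u_i$ on $\tilde V'$) rather than the $\text{im}(H_X)$-minimization of $w_j$.
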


\underline{Comment:} Again, under the reasonable assumptions that $\frac{n_X\rho_X}{n} \leq 1$, that $s$ scales linearly with $t$ and that $n_Z$ and $n_X$ scale linearly with $n$, we get soundness

\begin{equation}
    \Omega\left(\frac{\rho_X}{t}\right).
\end{equation}

\begin{proof}
     Consider some $x \in C_1$. Let $\tilde{V} = \left(v_1, ..., v_t\right)$ be the standard basis for $\mathbb{F}_2^t$. Let $\tilde{V} = \tilde{V}' \cup \tilde{V}''$ be a partition of $\tilde{V}$ such that the submatrix $\left(H^T\right)|_{\tilde{V}' \times \tilde{E}}$ is square and non-singular (where $\tilde{E} = \left(e_1, ..., e_s\right)$ is the standard basis for $\mathbb{F}_2^s$) which is possible since H is full rank. We may, without loss of generality, order the basis elements in $\tilde{V}$ such that $\tilde{V}' = \left(v_1, ..., v_s\right)$. Then $H^T$ looks as in Figure \ref{HT}, but we copy this into Figure \ref{HTcopy} for ease of reference. We have defined submatrices $H' = \left(H^T\right)|_{\tilde{V}' \times \tilde{E}}$ and $H'' = \left(H^T\right)|_{\tilde{V}'' \times \tilde{E}}$.

\begin{figure}[h]
\begin{center}
\begin{tikzpicture}
\node at (0,0.5) {$H^T = $};
\node at (2.5,-0.5) {$H'$};
\node at (2.5,-2.75) {$H''$};
\node at (0.35, -0.45) {$\tilde{V}'$};
\node at (0.35, -2.7) {$\tilde{V}''$};
\node at (2.54, 1.5) {$\tilde{E}$};
\node at (4.375,-0.525) {$s$};
\node at (4.7,-2.81) {$t-s$};
\node at (2.4,-3.85) {$s$};
\draw (1,1) rectangle (4,-2);
\draw (1,-2) rectangle (4,-3.5);
\draw [stealth-stealth](4.15,0.99)--(4.15,-1.99);
\draw [stealth-stealth](4.15,-3.49)--(4.15,-2.01);
\draw [stealth-stealth](1.01,-3.65)--(3.99,-3.65);
\node at (5.5,0) {\phantom{}};

\draw [decorate, decoration = {brace}, thick] (0.9,-1.975) --  (0.9,0.975);
\draw [decorate, decoration = {brace}, thick] (0.9,-3.475) --  (0.9,-2.025);
\draw [decorate, decoration = {brace}, thick] (1.025,1.1) --  (3.975,1.1);
\end{tikzpicture}
\end{center}\caption{}\label{HTcopy}
\end{figure}

    Now, vectors in $\text{im}(H^T)$ are nothing more than linear combinations of columns of $H^T$. Because $H'$ is non-singular, for each $i \in \{1, ..., s\}$, there is a vector in $\text{im}(H^T)$, call it $c_i$, whose restriction to $\tilde{V}'$ is supported in the i-th place and nowhere else i.e.

    \begin{equation}
        \left(c_i|_{\tilde{V}'}\right)_j = \begin{cases}
1, \;\;\;\; j=i\\
0, \;\;\;\; j\neq i
\end{cases}.
    \end{equation}
Now, with $x = (\alpha,\beta)$ for $\alpha \in\mathbb{F}_2^{n} \otimes \mathbb{F}_2^t$ and $\beta \in \mathbb{F}_2^{n_X} \otimes \mathbb{F}_2^s$, let us write $\alpha = \sum_{i=1}^t u_i \otimes v_i$. We can then consider the 1-boundary

    \begin{equation}
    \left(\sum_{i=1}^su_i \otimes c_i, \star\right)
    \end{equation}
where the element in $\mathbb{F}_2^{n_X} \otimes \mathbb{F}_2^s$, $\star$, is irrelevant to us. This is a 1-boundary because it is an image of an element of $\mathbb{F}_2^{n}\otimes \mathbb{F}_2^s$ under $\left(I \otimes H^T, H_X \otimes I\right)$. We can add this to $x$ to see that, without loss of generality, we need only consider $x = (\alpha,\beta)$ as above with $\alpha = \sum_{i = s+1}^t u_i \otimes v_i$. Also, for each $u_i$ with $i \in \{s+1, ..., t\}$, we may consider a $z_i \in \ker(H_X)$ such that $\left|u_i + z_i\right| = d\left(u_i, \ker(H_X)\right)$. Adding the 1-cycle $\left(\sum_{i=s+1}^t z_i \otimes v_i,0\right)$ to $x$ shows that, without loss of generality, we have $|u_i| = d\left(u_i, \ker(H_X)\right)$ for all $i \in \{s+1, ..., t\}$.
    \\
    
    Now let us write

    \begin{equation}
        x = \left(\sum_{i=s+1}^tu_i \otimes v_i, \sum_{j=1}^s y_j \otimes e_j\right)
    \end{equation}
for some $y_j \in \mathbb{F}_2^{n_X}$. From this we get

    \begin{equation}
        \partial_1(x) = \sum_{i=s+1}^t(H_Xu_i) \otimes v_i + \sum_{j=1}^s y_j \otimes (H^Te_j).
    \end{equation}
Now for $i \in \{1, ..., s\}$, we define

    \begin{equation}
        y_i' = \sum_{j \in supp \left(H^T\right)_{i,:}}y_j
    \end{equation}
and for $i \in \{s+1, ..., t\}$, we define

    \begin{equation}
        y_i'' = \sum_{j \in supp \left(H^T\right)_{i,:}}y_j
    \end{equation}
where $\left(H^T\right)_{i,:}$ denotes the i-th row of $H^T$. Then

    \begin{equation}
        \partial_1(x) = \sum_{i=1}^s y_i' \otimes v_i + \sum_{i=s+1}^t y_i'' \otimes v_i + \sum_{i=s+1}^t (H_X u_i) \otimes v_i
    \end{equation}
and so

    \begin{equation}
        \left|\partial_1(x)\right| = \sum_{i=1}^s \left|y_i'\right| + \sum_{i=s+1}^t\left(\left|H_Xu_i\right| + \left|y_i''\right| -2\left|H_Xu_i \cap y_i''\right|\right)
    \end{equation}
using that for binary vectors $a$ and $b$ of the same length, $|a + b| = |a| + |b| - 2|a \cap b|$.
    \\
    
    First, let us consider just the term $\sum_{i=1}^s \left|y_i'\right|$. Using the triangle inequality, we can add together any desired combination of the $y_i'$. For example, 

    \begin{align}
        \sum_{i=1}^s \left|y_i'\right| &= \left|y_1'\right| + \left|y_2'\right| + ... + \left|y_s'\right|\\
        &\geq \left|y_1' + y_2'\right| + \left|y_3'\right| + ... + \left|y_s'\right|\\
        &\geq \left|y_1' + y_2'\right|.
    \end{align}
We see that we have, for any subset of the indices $\{i_1, ..., i_k\} \subseteq [s]$, 

    \begin{equation}
        \sum_{i=1}^s \left|y_i'\right| \geq \left|y_{i_1}' + ... + y_{i_k}'\right|.
    \end{equation}
Moreover, the matrix $H'$ is full rank, and since the $y_i'$ represent combinations of the $y_i$'s according to the rows of $H'$, we have that for any subset of the indices $\{j_1, ..., j_l\} \subseteq [s]$,

    \begin{equation}
        \sum_{i=1}^s \left|y_i'\right| \geq \left|y_{j_1} + ... + y_{j_l}\right|.
    \end{equation}
In particular, 

    \begin{equation}
        \sum_{i=1}^s \left|y_i'\right| \geq \left|y_j\right| \;\; \forall j \in [s].
    \end{equation}  
Then, by Fact \ref{fact}, 

    \begin{equation}
        \sum_{i=1}^s \left|y_i'\right| \geq \frac{\sum_{j=1}^s\left|y_j\right|}{s}. \label{bound1}
    \end{equation}
In the exact same way, we obtain

    \begin{equation}
        \sum_{i=1}^s \left|y_i'\right| \geq \frac{\sum_{i=s+1}^t\left|y_i''\right|}{t-s}
    \end{equation}
from which we get

    \begin{align}
        \left|\partial_1(x)\right| &\geq \frac{1}{t-s}\sum_{i=s+1}^t\left|y_i''\right| + \sum_{i=s+1}^t\left(\left|H_Xu_i\right| + \left|y_i''\right| -2\left|H_Xu_i \cap y_i''\right|\right)\\
        &\geq \frac{1}{t-s}\sum_{i=s+1}^t\left|y_i''\right| + \frac{1}{t-s}\sum_{i=s+1}^t\left(\left|H_Xu_i\right| + \left|y_i''\right| -2\left|H_Xu_i \cap y_i''\right|\right)\\
        &=\frac{1}{t-s}\sum_{i=s+1}^t\left(\left|H_Xu_i\right| + 2\left|y_i''\right| -2\left|H_Xu_i \cap y_i''\right|\right)\\
        &\geq \frac{1}{t-s}\sum_{i=s+1}^t\left|H_Xu_i\right|\label{bound2}
    \end{align}
where we have used the fact that for any binary vectors $a$ and $b$ of the same length, $|a \cap b| \leq |a|$. Then, combining Equations \eqref{bound1} and \eqref{bound2} and using Fact \ref{fact} gives us

    \begin{align}
        \left|\partial_1(x)\right| &\geq \frac{s}{t}\frac{\sum_{i=1}^s\left|y_i\right|}{s} + \left(1- \frac{s}{t}\right) \frac{1}{t-s} \sum_{i=s+1}^t\left|H_Xu_i\right|\\
        &= \frac{1}{t}\left(\sum_{j=1}^s\left|y_j\right| + \sum_{i=s+1}^t\left|H_Xu_i\right|\right).
    \end{align}
Then, using the soundness of $H_X$,

    \begin{equation}
        \left|\partial_1(x)\right| \geq \frac{1}{t}\left(\sum_{j=1}^s\left|y_j\right| + \frac{n_X\rho_X}{n}\sum_{i=s+1}^t\left|u_i\right|\right)
    \end{equation}
where we recall that $\left|u_i\right| = d\left(u_i, \ker(H_X)\right)$ for $i \in \{s+1, ..., t\}$. Then, 

    \begin{align}
        \left|\partial_1(x)\right| &\geq \frac{1}{t}\min\left(\frac{n_X\rho_X}{n},1\right)\left(\sum_{j=1}^s\left|y_j\right| + \sum_{i=s+1}^t\left|u_i\right|\right)\\
        &=\frac{1}{t}\min\left(\frac{n_X\rho_X}{n},1\right)\left|x\right|
    \end{align}
Finally, 

    \begin{equation}
        \frac{\left|\partial_1(x)\right|}{n_Xt} \geq \frac{1}{t}\min\left(\frac{n_X\rho_X}{n},1\right)\left(\frac{nt+n_Xs}{n_Xt}\right)\frac{d\left(x,\ker(\partial_1)\right)}{nt+n_Xs}
    \end{equation}
    as required.
\end{proof}

We are now in a position that we can prove Theorem \ref{genDBthm}.

\begin{proof}[Proof of Theorem \ref{genDBthm}]
    The result follows from the preliminary Lemmas \ref{componentSoundness} and \ref{DBresults} and by applying our Lemmas \ref{kerd2tSoundness} and  \ref{kerd1Soundness} with a good classical LDPC code, with independent checks.
\end{proof}

\section*{Acknowledgments}

We gratefully acknowledge discussions and support throughout from Sergii Strelchuk. Gratitude is also extended to Irit Dinur for discussions on the consequences of independence of checks in locally testable codes. TCL was supported in part by funds provided by the U.S. Department of Energy (D.O.E.) under the cooperative research agreement DE-SC0009919.
\bibliographystyle{unsrt}
\bibliography{references}

\end{document}